\documentclass[submission,copyright]{eptcs}

\title{Reachability and Safety Games under TSO Semantics}
\author{
    Stephan Spengler
    \institute{Uppsala University \\ Uppsala, Sweden}
    \email{stephan.spengler@it.uu.se}
}


\usepackage{aliascnt}
\usepackage{amsmath}
\usepackage{amssymb}
\usepackage{amsthm}
\usepackage{breakurl}
\usepackage[capitalise, noabbrev]{cleveref}
\usepackage{booktabs}
\usepackage{enumerate}
\usepackage{float}
\usepackage[T1]{fontenc}
\usepackage{hyperref}
\usepackage{multirow}
\usepackage{subcaption}
\usepackage{tabularx}
\usepackage{textcomp}
\usepackage{tikz}
\usepackage{underscore}
\usepackage{xcolor}

\DeclareMathAlphabet{\mathcal}{OMS}{cmsy}{m}{n}
\newcolumntype{C}[1]{>{\centering\arraybackslash}p{#1}}
\usetikzlibrary{decorations.pathreplacing,shapes}
\newcommand{\multicaption}[2]{\caption{\tabular[t]{@{}l@{}}#1\\#2\endtabular}}

\newtheorem{theorem}{Theorem}

\newaliascnt{lemma}{theorem}
\newtheorem{lemma}[lemma]{Lemma}
\aliascntresetthe{lemma}

\newaliascnt{corollary}{theorem}

\aliascntresetthe{corollary}

\newaliascnt{claim}{theorem}
\newtheorem{claim}[claim]{Claim}
\aliascntresetthe{claim}

\theoremstyle{definition} 
\newaliascnt{remark}{theorem}

\aliascntresetthe{remark}


\newtheorem*{lemma*}{Lemma}
\newtheorem*{claim*}{Claim}


\newcommand{\inference}[3]{\textbf{#1} & \frac{#2}{#3}}

\newcommand{\Nat}{\mathbb{N}}

\newcommand{\newsemantics}[2]{\newcommand{#1}{\mathsf{#2}}}
\newcommand{\newfunction}[2]{\newcommand{#1}{\mathop\mathrm{#2}}} 
\newcommand{\newcomponent}[2]{\newcommand{#1}{\mathsf{#2}}}
\newcommand{\newinstruction}[2]{\newcommand{#1}{\mathtt{#2}}}
\newcommand{\renewinstruction}[2]{\renewcommand{#1}{\mathtt{#2}}}


\newcommand{\of}[1]{(#1)}
\newcommand{\tuple}[1]{\langle#1\rangle}
\newcommand{\set}[1]{\{#1\}}

\newcommand{\cof}[1][]{\ifthenelse{\isempty{#1}}{}{\of{#1}}}
\renewcommand{\to}[1][]{\mathop{\xrightarrow{~#1~}}}
\renewcommand{\part}{\rightharpoonup}
\newcomponent{\word}{w}

\newcommand{\newclass}[2]{\newcommand{#1}{\textsc{#2}}}
\newclass{\exptime}{ExpTime}
\newclass{\etime}{ETime}
\newclass{\nexptime}{NexpTime}
\newclass{\class}{Class}
\newclass{\pspace}{PSpace}
\newclass{\npspace}{NPSpace}
\newclass{\expspace}{ExpSpace}
\newclass{\dtime}{DTime}
\newclass{\dspace}{DSpace}


\newcommand{\TS}{\mathcal{T}}
\newcommand{\kstar}{^{*}}

\newcomponent{\conf}{c}
\newcomponent{\confset}{C}
\newcomponent{\lbl}{label}
\newcomponent{\lblset}{L}
\newcomponent{\state}{q}
\newcomponent{\stateset}{Q}


\newfunction{\post}{Post}
\newfunction{\pre}{Pre}

\newsemantics{\final}{final}
\newsemantics{\target}{target}
\newsemantics{\all}{all}
\newsemantics{\true}{true}
\newsemantics{\false}{false}


\newcommand{\program}{\mathcal{P}}
\newcomponent{\process}{Proc}
\newcomponent{\tso}{TSO}
\newcomponent{\dtso}{DTSO}
\newcomponent{\transition}{\delta}

\newcomponent{\xvar}{x}
\newcomponent{\varset}{Vars}
\newcomponent{\dval}{d}
\newcomponent{\valset}{Dom}
\newcomponent{\msg}{m}
\newcomponent{\instr}{instr}
\newcomponent{\instrs}{Instrs}
\newcommand{\xd}{\xvar, \dval}
\newcomponent{\self}{self}
\newcomponent{\other}{other}
\newcomponent{\own}{own}

\newinstruction{\rd}{rd}
\renewinstruction{\wr}{wr}
\newinstruction{\nop}{skip}
\newinstruction{\mf}{mf}
\newinstruction{\up}{up} 
\newinstruction{\prop}{prop} 
\newinstruction{\del}{del} 

\newcommand{\indexset}{\mathcal{I}}
\newcommand{\statemap}{\mathcal{S}}
\newcommand{\buffermap}{\mathcal{B}}
\newcommand{\memorymap}{\mathcal{M}}
\newcommand{\pid}{\iota}

\newcomponent{\view}{v}
\newcomponent{\viewset}{V}
\newcommand{\valuemap}{\mathcal{V}}
\newcommand{\fencemap}{\mathcal{F}}


\newcommand{\game}{\mathcal{G}}
\newcomponent{\play}{P}
\newcomponent{\wincon}{W}
\newcomponent{\bisim}{Z}
\newcommand{\colset}{\mathcal{C}}
\newcommand{\extend}{\uparrow_\program}
\newcommand{\restrict}{\downarrow^\pid}
\newcomponent{\subconfset}{D}
\newcomponent{\subconf}{d}
\newcomponent{\altconf}{d}


\newcommand{\channelsystem}{\mathcal{L}}
\newcomponent{\channelstate}{s}
\newcomponent{\channelstateset}{S}
\newcomponent{\channelset}{L}
\newcomponent{\channelmessage}{m}
\newcomponent{\channelmessageset}{M}
\newcomponent{\channeloperation}{op}
\newcomponent{\channeloperationset}{Op}


\newcomponent{\letter}{\sigma}
\newcomponent{\alphabet}{\Sigma}
\newcomponent{\tmL}{L}
\newcomponent{\tmR}{R}
\newcomponent{\tmD}{D}
\newcomponent{\pos}{i}
\newcomponent{\posj}{j}


\newcommand{\xrd}{\xvar_\rd}
\newcommand{\xwr}{\xvar_\wr}
\newcomponent{\yvar}{y}
\newcomponent{\zvar}{z}

\newcomponent{\hstate}{h}
\newcomponent{\rstate}{r}

\newcomponent{\channeledge}{e}

\begin{document}

\maketitle

\begin{abstract}
We consider games played on the transition graph of concurrent programs running under the Total Store Order (TSO) weak memory model.
Games are frequently used to model the interaction between a system and its environment, in this case between the concurrent processes and the nondeterministic TSO buffer updates.
In our formulation, the game is played by two players, who alternatingly make a move:
The \emph{process player} can execute any enabled instruction of the processes, while the \emph{update player} takes care of updating the messages in the buffers that are between each process and the shared memory.
We show that the reachability and safety problem of this game reduce to the analysis of single-process (non-concurrent) programs.
In particular, they exhibit only finite-state behaviour.
Because of this, we introduce different notions of \emph{fairness}, which force the two players to behave in a more realistic way.
Both the reachability and safety problem then become undecidable.
\end{abstract}

\section{Introduction}

In concurrent programs, different processes interact with each other through the use of shared memory. Programmers usually unconsciously assume that the semantics adhere to the Sequential Consistency (SC) memory model \cite{DBLP:journals/tc/Lamport79}. In SC, the execution of processes can be interleaved, but write instructions are visible in the memory in the exact order in which they were issued. However, most modern architectures, such as Intel x86 \cite{x86-swdmanual-1-3}, SPARC \cite{sparc9}, IBM's POWER \cite{power-isa-v31b}, and ARM \cite{arm-v7ar-refman}, implement several relaxations and optimisations that improve memory access latency but break SC assumptions. A standard model that is weaker than SC allows the reordering of reads and writes of the same process, as long as it maintains the appearance of SC from the perspective of each individual process. The implementation of this optimisation adds an unbounded first-in-first-out write buffer between each process and the shared memory. The buffer is used to delay write operations. This model is called Total Store Ordering (TSO) and is a faithful formalisation of SPARC and Intel x86 ~\cite{DBLP:conf/tphol/OwensSS09,DBLP:journals/cacm/SewellSONM10}.

Verification under TSO semantics is difficult due to the unboundedness of the buffers. Even if each process can be modelled as a finite-state system, the program itself has a state space of infinite size. The reachability problem for programs running under TSO semantics is to decide whether a target program state is reachable from a given initial state during program execution. If the target state is considered to be a bad state, it is also called the safety problem. Solving reachability and safety helps in deciding if a program is correct, i.e. if it adheres to a specification or if it can avoid states of undefined behaviour. Using alternative but equivalent semantics, it has been shown that the reachability problem is decidable \cite{DBLP:conf/popl/AtigBBM10,DBLP:conf/tacas/AbdullaACLR12,DBLP:journals/lmcs/AbdullaABN18}. Furthermore, lossy channel system \cite{wsts2,wsts1,DBLP:conf/icalp/AbdullaJ94,DBLP:journals/ipl/Schnoebelen02} can be simulated by programs running under TSO semantics \cite{DBLP:conf/popl/AtigBBM10}. This implies that the reachability problem is non-primitive recursive \cite{DBLP:journals/ipl/Schnoebelen02} and the repeated reachability problem is undecidable \cite{DBLP:conf/icalp/AbdullaJ94}. Additionally, the termination problem has been shown to be decidable \cite{DBLP:journals/siglog/Atig20} using the framework of well-structured transition systems \cite{wsts1,wsts2}.

In this paper, we consider games played on the transition graph of concurrent programs running under TSO semantics. Formal games provide a framework to reason about the behaviour of a system and the interaction between the system and its environment. In particular, they have been extensively used in controller synthesis problems \cite{DBLP:journals/tcs/ArnoldVW03,DBLP:conf/ecbs/BackS04,DBLP:conf/concur/BassetKW14,DBLP:conf/adhs/BalkanVT15,DBLP:journals/corr/abs-2209-10319}. A previous paper introduces safety games in which two players alternatingly execute instructions of a concurrent program \cite{DBLP:journals/corr/abs-2310-00990}. Motivated by this work, we propose a game setting that more closely models the interplay between a system and the environment: The first player controls the execution of the program instructions, while the second player handles the nondeterministic updates of the store buffers to the shared memory. This model sees the process and the update mechanism as antagonistic, and allows us to reason about the correctness of the program regardless of the update behaviour.

We consider two types of game objectives: In a reachability game, the process player tries to reach a given set of target states, while the update player tries to avoid this; In a safety game, these two roles are reversed. We show that in both cases finding the winner of the game reduces to the analysis of games being played on a program with just one process. Furthermore, we show that these games are bisimilar to finite-state games and thus decidable. In particular, the reachability and safety problem are \pspace-complete. The reason that the concurrent programs exhibit a finite-state character lies in the optimal behaviour of the two players. If the player that controls the processes has a winning strategy, then she can win by playing in only one process, ignoring all the other processes of the program. On the other hand, if the player controlling the buffer is able to win, she can do so by never letting any write operation reach the memory. In both cases, there is no concurrency in the sense that the processes do not interact or communicate with each other. This is not realistic, since we should be able to assume that if the program runs a sufficiently long duration (1) every process will be executed and (2) every write stored in the buffer will be updated to the memory.

We rectify this issue by introducing two fairness conditions. First, in an infinite run the process player must execute each enabled process infinitely many times. Second, the update player must make sure that each write operation reaches the memory after finitely many steps. We show that both the reachability and safety problem become undecidable with these restrictions. To do so, we use a reduction from perfect channel systems adapted from \cite{DBLP:journals/corr/abs-2310-00990}.

Finally, we investigate an alternative TSO semantics in our game setting. The authors of \cite{DBLP:journals/lmcs/AbdullaABN18} propose a load-buffer semantics for TSO which reverts the direction of the information flow between the processes and the shared memory. In their model, the buffer is filled with values from the memory which can later be read by the process. Using well-structured transition systems, they showed that it is equivalent to the classical TSO semantics with respect to state reachability. We explore whether the equivalence also holds in the two-player game, but come to the conclusion that this is not the case. In particular, we construct a concurrent program that is won by the update player under store-buffer semantics but by the process player under load-buffer semantics.

\section{Preliminaries}

\paragraph{Transition Systems}
A \emph{(labelled) transition system} is a triple $\TS = \tuple{ \confset, \lblset, \to }$, where $\confset$ is a set of \emph{configurations}, $\lblset$ is a set of \emph{labels}, and $\to \subseteq \confset \times \lblset \times \confset$ is a \emph{transition relation}.
We usually write $\conf_1 \to[\lbl] \conf_2$ if $\tuple{ \conf_1, \lbl, \conf_2} \in \to$.
Furthermore, we write $\conf_1 \to \conf_2$ if there exists some $\lbl$ such that $\conf_1 \to[\lbl] \conf_2$.
A \emph{run} $\pi$ of $\TS$ is a sequence of transitions $\conf_0 \to[\lbl_1] \conf_1 \to[\lbl_2] \conf_2 \dots \to[\lbl_n] \conf_n$.
It is also written as $\conf_0 \to[\pi] \conf_n$.
A configuration $\conf'$ is \emph{reachable} from a configuration $\conf$, if there exists a run from $\conf$ to $\conf'$.

For a configuration $\conf$, we define $\pre\of\conf = \set{ \conf' \mid \conf' \to \conf }$ and $\post\of\conf = \set{ \conf' \mid \conf \to \conf' }$.
We extend these notions to sets of configurations $\confset'$ with $\pre(\confset') = \bigcup_{\conf \in \confset'} \pre\of\conf$ and $\post(\confset') = \bigcup_{\conf \in \confset'} \post\of\conf$.

An \emph{unlabelled transition system} is a transition system without labels.
Formally, it is defined as a transition system with a singleton label set.
In this case, we omit the labels.

\paragraph{Perfect Channel Systems}
Given a set of messages $\channelmessageset$, define the set of channel operations $\channeloperationset = \set{ !\channelmessage, ?\channelmessage \mid \channelmessage \in \channelmessageset} \cup \set\nop$.
A \emph{perfect channel system} (PCS) is a triple $\channelsystem = \tuple{ \channelstateset, \channelmessageset, \transition }$, where $\channelstateset$ is a set of states, $\channelmessageset$ is a set of messages, and $\transition \subseteq \channelstateset \times \channeloperationset \times \channelstateset$ is a transition relation.
We write $\channelstate_1 \to[\channeloperation] \channelstate_2$ if $\tuple{ \channelstate_1, \channeloperation, \channelstate_2 } \in \transition$.

Intuitively, a PCS models a finite state automaton that is augmented by a \emph{perfect} (i.e. non-lossy) FIFO buffer, called \emph{channel}.
During a \emph{send operation} $!\channelmessage$, the channel system appends $\channelmessage$ to the tail of the channel.
A transition $?\channelmessage$ is called \emph{receive operation}.
It is only enabled if the channel is not empty and $\channelmessage$ is its oldest message.
When the channel system performs this operation, it removes $\channelmessage$ from the head of the channel.
Lastly, a $\nop$ operation just changes the state, but does not modify the buffer.

The formal semantics of $\channelsystem$ are defined by a transition system $\TS_\channelsystem = \tuple{ \confset_\channelsystem, \lblset_\channelsystem, \to_\channelsystem }$, where $\confset_\channelsystem = \channelstateset \times \channelmessageset\kstar$, $\lblset_\channelsystem = \channeloperationset$ and the transition relation $\to_\channelsystem$ is the smallest relation given by:
\begin{itemize}
	\item If $\channelstate_1 \to[!\channelmessage] \channelstate_2$ and $\word \in \channelmessageset\kstar$, then $\tuple{ \channelstate_1, \word } \to[!\channelmessage]_\channelsystem \tuple{ \channelstate_2, \channelmessage \cdot \word }$.
	\item If $\channelstate_1 \to[?\channelmessage] \channelstate_2$ and $\word \in \channelmessageset\kstar$, then $\tuple{ \channelstate_1, \word \cdot \channelmessage } \to[?\channelmessage]_\channelsystem \tuple{ \channelstate_2, \word }$.
	\item If $\channelstate_1 \to[\nop] \channelstate_2$ and $\word \in \channelmessageset\kstar$, then $\tuple{ \channelstate_1, \word } \to[\nop]_\channelsystem \tuple{ \channelstate_2, \word }$.
\end{itemize}
A state $\channelstate_F \in \channelstateset$ is \emph{reachable} from a configuration $\conf_0 \in \confset_\channelsystem$, if there exists a configuration $\conf_F = \tuple{ \channelstate_F, \word_F }$ such that $\conf_F$ is reachable from $\conf_0$ in $\TS_\channelsystem$.
The \textbf{state reachability problem} of PCS is, given a perfect channel system $\channelsystem$, an initial configuration $\conf_0 \in \confset_\channelsystem$ and a final state $\channelstate_F \in \channelstateset$, to decide whether $\channelstate_F$ is reachable from $\conf_0$ in $\TS_\channelsystem$.
It is undecidable \cite{DBLP:journals/jacm/BrandZ83}.

\section{Concurrent Programs}

\paragraph{Syntax}
Let $\valset$ be a finite data domain and $\varset$ be a finite set of shared variables over $\valset$.
We define the \emph{instruction set} $\instrs = \set{ \rd\of\xd, \wr\of\xd \mid \xvar \in \varset, \dval \in \valset } \cup \set{ \nop, \mf }$,
which are called \emph{read}, \emph{write}, \emph{skip} and \emph{memory fence}, respectively.
A process is represented by a finite state labelled transition system.
It is given as the triple $\process = \tuple{ \stateset, \instrs, \transition }$, where $\stateset$ is a finite set of \emph{local states} and $\transition \subseteq \stateset \times \instrs \times \stateset$ is the transition relation.
As with transition systems, we write $\state_1 \to[\instr] \state_2$ if $\tuple{ \state_1, \instr, \state_2} \in \transition$ and $\state_1 \to \state_2$ if there exists some $\instr$ such that $\state_1 \to[\instr] \state_2$.

A \emph{concurrent program} is a tuple of processes $\program = \tuple{ \process^\pid }_{\pid \in \indexset}$, where $\indexset$ is a finite set of process identifiers.
For each $\pid \in \indexset$ we have $\process^\pid = \tuple{ \stateset^\pid, \instrs, \transition^\pid }$.
A \emph{global} state of $\program$ is a function $\statemap: \indexset \to \bigcup_{\pid \in \indexset} \stateset^\pid$ that maps each process to its local state, i.e $\statemap(\pid) \in \stateset^\pid$.

\paragraph{TSO Semantics}
Under TSO semantics, the processes of a concurrent program do not interact with the shared memory directly, but indirectly through a FIFO \emph{store buffer} instead.
When performing a \emph{write} instruction $\wr\of\xd$, the process adds a new message $\tuple\xd$ to the tail of its store buffer.
A \emph{read} instruction $\rd\of\xd$ works differently depending on the current buffer content of the process.
If the buffer contains a write message on variable $\xvar$, the value $\dval$ must correspond to the value of the most recent such message.
Otherwise, the value is read directly from memory.
A \emph{skip} instruction only changes the local state of the process.
The \emph{memory fence} instruction is disabled, i.e. it cannot be executed, unless the buffer of the process is empty.
Additionally, at any point during the execution, the process can \emph{update} the write message at the head of its buffer to the memory.
For example, if the oldest message in the buffer is $\tuple\xd$, it will be removed from the buffer and the memory value of variable $\xvar$ will be updated to contain the value $\dval$.
This happens in a nondeterministic manner.

Formally, we introduce a TSO \emph{configuration} as a tuple $\conf = \tuple{ \statemap, \buffermap, \memorymap }$, where:
\begin{itemize}
	\item $\statemap: \indexset \to \bigcup_{\pid \in \indexset} \stateset^\pid$ is a global state of $\program$.
	\item $\buffermap: \indexset \to (\varset \times \valset)\kstar$ represents the buffer state of each process.
	\item $\memorymap: \varset \to \valset$ represents the memory state of each shared variable.
\end{itemize}
Given a configuration $\conf$, we write $\statemap\of\conf$, $\buffermap\of\conf$ and $\memorymap\of\conf$ for the global program state, buffer state and memory state of $\conf$.
The semantics of a concurrent program running under TSO is defined by a transition system $\TS_\program = \tuple{ \confset_\program, \lblset_\program, \to_\program }$,
where $\confset_\program$ is the set of all possible TSO configurations
and $\lblset_\program = \set{ \instr_\pid \mid \instr \in \instrs, \pid \in \indexset } \cup \set{ \up_\iota \mid \pid \in \indexset }$ is the set of labels.
The transition relation $\to_\program$ is given by the rules in \autoref{fig:tso-semantics}, where we use $\buffermap\of\pid|_{\set\xvar \times \valset}$ to denote the restriction of $\buffermap\of\pid$ to write messages on the variable $\xvar$.
Furthermore, we define $\up\kstar$ to be the transitive closure of $\set{ \up_\iota \mid \pid \in \indexset }$, i.e. $\conf_1 \to[\up\kstar]_\program \conf_2$ if and only if $\conf_2$ can be obtained from $\conf_1$ by some amount of buffer updates.

\begin{figure}
\centering
\begin{equation*}
\begin{array}{lc}

\inference{read-own-write}
	{\state \to[\rd\of\xd] \state' \qquad \statemap\of\pid = \state \qquad \buffermap\of\pid|_{\set\xvar \times \valset} = \tuple\xd \cdot \word}
	{\tuple{ \statemap, \buffermap, \memorymap} \to[\rd\of\xd_\pid]_\program \tuple{ \statemap[\pid \leftarrow \state'], \buffermap, \memorymap}}
\bigskip\\
\inference{read-from-memory}
	{\state \to[\rd\of\xd] \state' \qquad \statemap\of\pid = \state \qquad \buffermap\of\pid|_{\set\xvar \times \valset} = \varepsilon \qquad \memorymap\of\xvar = \dval}
	{\tuple{ \statemap, \buffermap, \memorymap} \to[\rd\of\xd_\pid]_\program \tuple{ \statemap[\pid \leftarrow \state'], \buffermap, \memorymap}}
\bigskip\\
\inference{write}
	{\state \to[\wr\of\xd] \state' \qquad \statemap\of\pid = \state}
	{\tuple{ \statemap, \buffermap, \memorymap} \to[\wr\of\xd_\pid]_\program \tuple{ \statemap[\pid \leftarrow \state'], \buffermap[\pid \leftarrow \tuple\xd \cdot \buffermap\of\pid], \memorymap}}
\bigskip\\
\inference{skip}
	{\state \to[\nop] \state' \qquad \statemap\of\pid = \state}
	{\tuple{ \statemap, \buffermap, \memorymap} \to[\nop_\pid]_\program \tuple{ \statemap[\pid \leftarrow \state'], \buffermap, \memorymap}}
\bigskip\\
\inference{memory-fence}
	{\state \to[\mf] \state' \qquad \statemap\of\pid = \state \qquad \buffermap\of\pid = \varepsilon}
	{\tuple{ \statemap, \buffermap, \memorymap} \to[\mf_\pid]_\program \tuple{ \statemap[\pid \leftarrow \state'], \buffermap, \memorymap}}
\bigskip\\
\inference{update}
	{\buffermap\of\pid = \word \cdot \tuple\xd}
	{\tuple{ \statemap, \buffermap, \memorymap} \to[\up_\pid]_\program \tuple{ \statemap, \buffermap[\pid \leftarrow \word], \memorymap[\xvar \leftarrow \dval]}}

\end{array}
\end{equation*}
\caption{TSO semantics}
\label{fig:tso-semantics}
\end{figure}

A global state $\statemap_F$ is \emph{reachable} from an initial configuration $\conf_0$, if there is a configuration $\conf_F$ with $\statemap(\conf_F) = \statemap_F$ such that $\conf_F$ is reachable from $\conf_0$ in $\TS_\program$.
The \textbf{state reachability problem} of TSO is, given a program $\program$, an initial configuration $\conf_0$ and a final global state $\statemap_F$, to decide whether $\statemap_F$ is reachable from $\conf_0$ in $\TS_\program$.

\section{Games}


A \emph{game} is an unlabelled transition system, in which two players A and B take turns making a \emph{move} in the transition system, i.e. changing the state of the game from one configuration to an adjacent one.
In a \emph{reachability game}, the goal of player A is to reach a given set of target states, while player B tries to avoid this.
In a \emph{safety game}, the roles are swapped.

Formally, a game is defined as a tuple $\game = \tuple{ \confset, \confset_A, \confset_B, \to}$, where $\confset$ is the set of configurations, $\confset_A$ and $\confset_B$ form a partition of $\confset$, and $\to$ is a transition relation on $\confset$.
For the games considered in this paper, the relation will always be restricted to $\to \subseteq (\confset_A \times \confset_B) \cup (\confset_B \times \confset_A)$, which means that the two players take turns alternatingly.

\paragraph{Plays and Winning Conditions}
An \emph{infinite play} $\play$ of $\game$ is an infinite sequence $\conf_0, \conf_1, \dots$ such that $\conf_i \to \conf_{i+1}$ for all $i \in \Nat$.
Similarly, a \emph{finite play} is a finite sequence $\conf_0, \conf_1, \dots, \conf_n$ such that $\conf_i \to \conf_{i+1}$ for all $i \in [0, \dots, n-1]$ and $\post(\conf_n) = \emptyset$, i.e. the play ends in a deadlock.
A \emph{winning condition} $\wincon$ is a subset of all infinite plays.
We say that player A is the winner of a play, if either the play is infinite and an element of $\wincon$, or if it is finite and ends in a deadlock for player B, i.e. $\conf_n \in \confset_B$.
Otherwise, player B wins the play.

In this work, we will consider two types of winning conditions.
A \emph{reachability condition} is given by a set $\confset_R \subseteq \confset$ which induces the winning condition $\wincon_R = \set{ \play = \conf_0, \conf_1, \dots \mid \exists i \in \Nat: \conf_i \in \confset_R}$, i.e. the set of all plays that visit a configuration in $\confset_R$.
Accordingly, a \emph{safety condition} is given by a set $\confset_S \subseteq \confset$ which induces the winning condition $\wincon_S = \set{ \play = \conf_0, \conf_1, \dots \mid \forall i \in \Nat: \conf_i \not\in \confset_S}$, i.e. the set of all plays that never visit a configuration in $\confset_S$.
Reachability games and safety games are dual to each other in the sense that a reachability game with winning condition $\confset_R$ can be seen as a safety game with winning condition $\confset_S = \confset \setminus \confset_R$, where the roles of players A and B are swapped.

\paragraph{Strategies}
A \emph{strategy} of player A is a partial function $\sigma_A: \confset\kstar \part \confset_B$, such that $\sigma_A(\conf_0, \dots, \conf_n)$ is defined if and only if $\conf_0, \dots, \conf_n$ is a prefix of a play, $\conf_n \in \confset_A$ and $\sigma_A(\conf_0, \dots, \conf_n) \in \post(\conf_n)$.
A strategy $\sigma_A$ is called \emph{positional}, if it only depends on $\conf_n$, i.e. if $\sigma_A(\conf_0, \dots, \conf_n) = \sigma_A(\conf_n)$ for all $(\conf_0, \dots, \conf_n)$ on which $\sigma_A$ is defined.
Thus, a positional strategy is usually given as a total function $\sigma_A: \confset_A \to \confset_B$.
For player B, strategies are defined accordingly.

Two strategies $\sigma_A$ and $\sigma_B$ together with an initial configuration $\conf_0$ induce a finite or infinite play $\play(\conf_0, \sigma_A, \sigma_B) = \conf_0, \conf_1, \dots$ such that $\conf_{i+1} = \sigma_A(\conf_0, \dots, \conf_i)$ for all $\conf_i \in \confset_A$ and $\conf_{i+1} = \sigma_B(\conf_0, \dots, \conf_i)$ for all $\conf_i \in \confset_B$.
Given a winning condition $\wincon$, a strategy $\sigma_A$ is \emph{winning} from a configuration $\conf_0$, if for \emph{all} strategies $\sigma_B$ it holds that player A wins the play $\play(\conf_0, \sigma_A, \sigma_B)$.
That is, for each $\sigma_B$, either $\play(\conf_0, \sigma_A, \sigma_B) \in \wincon$ or the play is finite and ends in a deadlock of player B.
A configuration $\conf_0$ is \emph{winning} for player A if she has a strategy that is winning from $\conf_0$.
Equivalent notions exist for player B.
Given a reachability condition $\wincon_R$ / a safety condition $\wincon_S$, the \textbf{reachability problem} / \textbf{safety problem} for a game $\game$ and a configuration $\conf_0$ is to decide whether $\conf_0$ is winning for player A.
\begin{lemma}[Proposition 2.21 in \cite{DBLP:conf/dagstuhl/Mazala01}]
\label{lem:positional}
    In reachability and safety games, every configuration is winning for exactly one player.
    A player with a winning strategy also has a positional winning strategy.
\end{lemma}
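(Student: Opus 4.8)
The plan is to establish the statement for reachability games and to obtain the safety case from the duality already noted above (swap the two players and replace the safety set $\confset_S$ by $\confset \setminus \confset_S$). So fix a reachability game $\game = \tuple{\confset, \confset_A, \confset_B, \to}$ with target set $\confset_R$ and let player A be the reachability player. I would use the standard \emph{attractor} construction: define the monotone operator
\[
F(X) \;=\; \confset_R \;\cup\; \set{\conf \in \confset_A \mid \post(\conf) \cap X \neq \emptyset} \;\cup\; \set{\conf \in \confset_B \mid \post(\conf) \subseteq X}
\]
on subsets of $\confset$, and let $W$ be its least fixpoint, reached by iterating $W^0 = \emptyset$, $W^{\alpha+1} = F(W^\alpha)$, and taking unions at limits until stabilisation (finitely many steps suffice when the arena is finite). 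For $\conf \in W$ write $\mathrm{rank}(\conf)$ for the least $\alpha$ with $\conf \in W^{\alpha+1}$. Observe that a $\confset_B$-configuration with no successors lands in $W^1$ with rank $0$: this is exactly how the construction accounts for a finite play that deadlocks for player B and is therefore won by A.

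From $W$ I would read off a positional winning strategy $\sigma_A$ for player A: for $\conf \in \confset_A \cap W$ of rank $\alpha$, the $\confset_A$-clause of $F$ provides some $\conf' \in \post(\conf) \cap W^\alpha$, and we set $\sigma_A(\conf) = \conf'$ (elsewhere $\sigma_A$ is irrelevant). Along any play consistent with $\sigma_A$ and starting in $W$, as long as $\confset_R$ has not been visited the configuration stays in $W$ and its rank strictly decreases at every step --- at $\confset_A$-configurations by the choice of $\sigma_A$, and at $\confset_B$-configurations because the $\confset_B$-clause forces \emph{every} successor to have strictly smaller rank and a positive-rank $\confset_B$-configuration has at least one successor. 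Well-foundedness of the ordinals then forces the play to reach rank $0$, i.e. to visit $\confset_R$ or to deadlock for B; either way A wins. Symmetrically, on $\bar W := \confset \setminus W$ I would use that, since $W$ is a fixpoint, $\conf \in \confset_B \cap \bar W$ implies $\post(\conf) \not\subseteq W$, so $\conf$ has a successor in $\bar W$ (and is not a deadlock), while $\conf \in \confset_A \cap \bar W$ implies $\post(\conf) \cap W = \emptyset$, so \emph{all} successors lie in $\bar W$ and $\conf \notin \confset_R$. Letting $\sigma_B(\conf)$ pick such a $\bar W$-successor for $\conf \in \confset_B \cap \bar W$, every play consistent with $\sigma_B$ from a configuration in $\bar W$ remains in $\bar W$, hence never visits $\confset_R \subseteq W$; an infinite such play is not in $\wincon_R$, and a finite one deadlocks at a $\confset_A$-configuration, so in both cases B wins.

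Putting this together, $W$ and $\bar W$ partition $\confset$, player A wins every configuration of $W$ with the positional strategy $\sigma_A$, player B wins every configuration of $\bar W$ with the positional strategy $\sigma_B$, and no configuration is won by both, since the unique play $\play(\conf_0, \sigma_A, \sigma_B)$ induced by two winning strategies cannot be won by each player at once; this proves the lemma for reachability games, and the safety case follows by applying it to the dual game, where a positional winning strategy transfers verbatim. I expect the main obstacle to be a matter of care rather than depth: getting the treatment of deadlocks right (encoding ``A wins a play that deadlocks for B'' through the base case and the $\confset_B$-clause of $F$) and justifying that $\bar W$ is closed under the moves available to B, which genuinely needs $W$ to be the actual least fixpoint and not merely some finite-stage approximation.
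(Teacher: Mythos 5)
The paper does not actually prove this lemma --- it cites it as Proposition~2.21 of Mazala's chapter --- and your attractor construction is precisely the standard argument behind that citation: the least fixpoint $W$, the rank-decreasing positional strategy for A on $W$, the trap argument showing $\overline{W}$ is closed under B's moves, and the encoding of B-deadlocks into the base case are all the right ingredients, and the B-side of your argument is fully correct.

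There is, however, one point where your proof conflicts with this paper's conventions. Here a winning condition is by definition a set of \emph{infinite} plays, and the winner of a \emph{finite} play is determined solely by who is deadlocked: a finite play ending in a deadlock at a $\confset_A$-configuration is won by B \emph{even if it has visited} $\confset_R$. Hence your step ``the play reaches rank $0$, i.e.\ visits $\confset_R$ or deadlocks for B; either way A wins'' is not justified in the first case. Concretely, take $\confset_A=\set{a}$, $\confset_B=\set{b}$, $b\to a$, $\post(a)=\emptyset$ and $\confset_R=\set{a}$: your $W$ contains $a$ (rank $0$ via the base case) and hence also $b$, yet every play here is finite and deadlocks at the $\confset_A$-configuration $a$, so B wins both positions. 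So with base set $\confset_R$ your $W$ overshoots the true winning region. The standard repair is to replace the base $\confset_R$ by $\confset_R\cap N$, where $N$ is the (positionally determined) winning region of the auxiliary safety game in which A only has to avoid ever being deadlocked at her own configuration; after reaching $\confset_R\cap N$, A switches to that safety strategy, and the combined strategy remains positional because the attractor strategy is used only off the target set. Alternatively, note that the paper later \emph{assumes} that a reachability game cannot deadlock once a target state is reached, under which assumption your proof is correct verbatim --- but the lemma as stated precedes that assumption. Your reduction of the safety case to the reachability case by complementing the winner predicate is sound (the two predicates are exact complements, also on finite plays), so fixing the reachability case fixes everything.
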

Since we only consider reachability and safety games in this paper, all strategies will be positional.

\section{Reachability and Safety Games under TSO Semantics}
\label{sec:tso-games}

We model the execution of a TSO program as a game between two players:
The \emph{process player A} takes the role of the program and decides at each execution step which instruction to execute.
The \emph{update player B} is in charge of the buffer message updates in between.

Formally, a TSO program $\program = \tuple{\process^\pid}_{\pid \in \indexset}$ induces a game $\game(\program) = \tuple{ \confset, \confset_A, \confset_B, \to}$ as follows.
The sets $\confset_A$ and $\confset_B$ are copies of the set $\confset^\program$ of TSO configurations, annotated by $A$ and $B$, respectively: $\confset_A := \set{ \conf_A \mid \conf \in \confset^\program}$ and $\confset_B := \set{ \conf_B \mid \conf \in \confset^\program}$.
The transition relation $\to$ is defined by the following rules:
\begin{itemize}
    \item
    \textbf{Program}\quad
    For each transition $\conf \to[\instr_\pid]_\program \conf'$ where $\conf, \conf' \in \confset^\program$, $\pid \in \indexset$ and $\instr \in \instrs$, it holds that $\conf_A \to[\instr_\pid] \conf'_B$.
    This means that the process player can execute any program instruction.
    \item
    \textbf{Update}\quad
    For each transition $\conf_B \in \confset_B$, it holds that $\conf_B \to[\up\kstar] \conf'_A$ for all $\conf'$ with $\conf \to[\up\kstar]_\program \conf'$.
    This means that the update player can update any amount of buffer messages (including zero) between each of the turns of the process player.
\end{itemize}

In the remainder of this work, we will consider reachability or safety winning conditions induced by a set of local states $\stateset_\wincon^\program \subset \stateset^\program$.
The corresponding set of configurations is $\confset_\wincon = \set{ \conf = \tuple{\statemap, \buffermap, \memorymap}_X \mid X \in \set{A,B} \land \exists \pid: \statemap(\pid) \in \stateset_\wincon^\program }$, that is, the set of all configurations where at least one process is in a state of $\stateset_\wincon^\program$.
The set $\confset_\wincon$ can then induce either a reachability or safety winning condition.
In the following, we will assume that the initial configuration (usually named $\conf_0$) is not contained in $\confset_\wincon$, since otherwise the game is decided immediately.
Furthermore, we desire that the process player immediately wins when reaching a target state in a reachability game, that is, we do not care whether the play can be extended infinitely or not.
Formally, we require that in a reachability game, a process cannot deadlock from a target state, implying that the process player cannot lose after reaching it.

\paragraph{Games on Single-Process Programs}
This section introduces games on single-process programs which will help us analysing the general case.
Given a game induced by a concurrent program $\program$, we compare it to the game on just one of the processes of $\program$.
We show that if the process player wins the single-process game, then she also wins the original game.
The main idea is that she achieves this by executing exactly the same instructions in both games.

For the remainder of this section, fix a program $\program = \tuple{\process^\pid}_{\pid \in \indexset}$ and a process index $\pid\in\indexset$.
Let $\program^\pid = \tuple{\process^\pid}$, i.e. the restriction of $\program$ to only the process $\process^\pid$.
Define $\game = \game(\program)$ and $\game^\pid = \game(\program^\pid)$, that is, the games induced by $\program$ and $\program^\pid$, respectively.
Let $\stateset_\wincon$ induce a reachability or safety winning condition for $\game$ and define the winning condition for $\game^\pid$ through $\stateset_\wincon^\pid = \stateset_\wincon \cap \stateset^\pid$.

Now, fix a configuration $\conf_0 \in \confset \setminus \confset_\wincon$ with empty buffers (i.e. $\buffermap(\conf_0) = \tuple\varepsilon_{\pid \in \indexset}$).
For $X \in \set{A,B}$ and a configuration $\conf = \tuple{ \statemap, \buffermap, \memorymap}_X \in \confset_X$, let $\conf\restrict = \tuple{\statemap\of\pid, \buffermap\of\pid, \memorymap}_X \in \confset_X^\pid$, which can be understood as the projection of $\conf$ onto the process $\process^\pid$.
Conversely, for a configuration $\conf^\pid \in \confset_X^\pid$ of $\game^\pid$,
define $\conf^\pid\extend = \tuple{ \statemap(\conf_0)[\pid \gets \statemap(\conf^\pid)], \buffermap(\conf_0)[\pid \gets \buffermap(\conf^\pid)], \memorymap(\conf^\pid) }_X \in \confset_X$,
that is, the configuration of $\game$ which is like $\conf^\pid$ for the process $\process^\pid$, but the local states and buffers of all other processes are as in the initial configuration $\conf_0$.
Note that for all $\conf^\pid$ of $\game^\pid$, it holds that $(\conf^\pid\extend)\restrict = \conf^\pid$.
On the other hand, $(\conf\restrict)\extend = \conf$ only holds for some $\conf$ of $\game$.

\begin{lemma}
\label{lem:extend}
    If the process player wins $\game^\pid$ starting from the configuration $\conf_0^\pid = \conf_0\restrict$, then she also wins $\game$ starting from $\conf_0$.
\end{lemma}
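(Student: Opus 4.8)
The plan is to take a positional winning strategy $\sigma^\pid$ for the process player in $\game^\pid$ (which exists by \autoref{lem:positional}) and lift it to a strategy $\sigma$ in $\game$ by acting only in process $\process^\pid$ and mimicking $\sigma^\pid$ there. Concretely, along any play of $\game$ I would maintain the invariant that the current configuration $\conf$ satisfies $\conf = (\conf\restrict)\extend$, i.e. all processes other than $\process^\pid$ are still in their initial local state with empty buffer, and that $\conf\restrict$ is exactly the configuration reached in the "shadow" play of $\game^\pid$ obtained by projecting every move onto $\process^\pid$. Whenever it is the process player's turn at a configuration $\conf_A$, she plays the instruction $\instr_\pid$ prescribed by $\sigma^\pid(\conf_A\restrict)$ in process $\process^\pid$; by the \textbf{Program} rule and the fact that the enabledness of a read/write/skip/fence of $\process^\pid$ depends only on $\statemap(\pid)$, $\buffermap(\pid)$ and $\memorymap$, this move is available in $\game$ and its effect on the $\process^\pid$-component and on memory is identical to the corresponding move in $\game^\pid$.

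The key steps, in order, are: (1) Fix $\sigma^\pid$ positional winning from $\conf_0^\pid$; define $\sigma$ on configurations of the form $(\conf^\pid\extend)$ by $\sigma((\conf^\pid)\extend) := (\sigma^\pid(\conf^\pid))\extend$, and arbitrarily elsewhere. (2) Establish the simulation invariant by induction on the length of a play $\play(\conf_0,\sigma,\sigma_B)$ for an arbitrary update-player strategy $\sigma_B$: the process-player moves preserve it by the argument above; the update-player moves ($\conf_B \to[\up\kstar] \conf'_A$) only remove messages from buffers, and since every buffer except $\process^\pid$'s is already empty, such a move can only pop messages from $\buffermap(\pid)$ and write them to memory — which is exactly an $\up\kstar$ move available to the update player in $\game^\pid$. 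So the projected play is a legitimate play of $\game^\pid$ consistent with $\sigma^\pid$ against the projected opponent moves. (3) Transfer the outcome: because $\stateset_\wincon^\pid = \stateset_\wincon \cap \stateset^\pid$ and the other processes never leave their (non-target, since $\conf_0 \notin \confset_\wincon$) initial states, a configuration $\conf$ of $\game$ lies in $\confset_\wincon$ iff $\conf\restrict$ lies in the winning set of $\game^\pid$; hence a reachability play won in the shadow game is won in $\game$. (4) Handle deadlocks: if $\play(\conf_0,\sigma,\sigma_B)$ is finite, its last configuration $\conf_n$ has $\post(\conf_n)=\emptyset$; I argue the projected play also deadlocks in $\game^\pid$ at $\conf_n\restrict$ — for an $A$-configuration this is because any enabled $\process^\pid$-instruction in $\game^\pid$ would, via \textbf{Program}, be enabled in $\game$ (the other processes being irrelevant to it), contradicting the deadlock; for a $B$-configuration the update player always has the trivial zero-update move, so $B$-configurations never deadlock. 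Thus $\conf_n$ is a $B$-deadlock in $\game$ iff $\conf_n\restrict$ is a $B$-deadlock in $\game^\pid$, and the winner is preserved. For the safety game the same invariant shows that if the shadow play never hits $\stateset_\wincon^\pid$ then the $\game$-play never hits $\stateset_\wincon$, so player A (now the safety player) wins there too.

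The main obstacle I anticipate is making step (2) fully precise for the update moves: a single transition $\conf_B \to[\up\kstar] \conf'_A$ in $\game$ corresponds to $\conf \to[\up\kstar]_\program \conf'$, a sequence of individual $\up_\jota$ steps over \emph{all} processes, and I must verify that when every buffer other than $\process^\pid$'s is empty this sequence only ever fires $\up_\pid$ and therefore projects cleanly to an $\up\kstar$ move in $\game^\pid$ that lands in $(\cdot)\extend$-form — in particular that the invariant "all other buffers empty" is genuinely preserved rather than merely assumed. A secondary subtlety is the reachability-game convention that the process player wins \emph{immediately} upon reaching $\confset_\wincon$: I should note that once the shadow play reaches $\stateset_\wincon^\pid$ the corresponding $\game$-configuration is already in $\confset_\wincon$, so I need not worry about whether the $\game$-play can be continued or deadlocks unfavourably afterwards. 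Everything else is a routine unfolding of the definitions of $\restrict$, $\extend$, and the two transition rules.
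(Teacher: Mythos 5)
Your proposal is correct and follows essentially the same route as the paper: lift the winning strategy $\sigma_A^\pid$ of $\game^\pid$ to $\game$ via $\sigma_A(\conf) = \sigma_A^\pid(\conf\restrict)\extend$, and show by induction on the number of moves that the play stays in $(\cdot)\extend$-form (all other buffers empty, other processes frozen at their initial states) and visits the same local states of $\process^\pid$ as the shadow play, which transfers the winning condition. The extra care you take with update moves and deadlocks is exactly the bookkeeping the paper defers to its extended version.
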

\begin{proof}[Proof idea.]
    Given a winning strategy $\sigma_A^\pid$ for the process player in $\game^\pid$, define a strategy in $\game$ by $\sigma_A(\conf) = \sigma_A^\pid(\conf\restrict)\extend$.
    We can show by induction over the number of moves that both strategies force the play to visit the same sequence of local states.
    Since winning the game is defined in terms of which local states are visited, it follows that $\sigma_A$ must be a winning strategy.
    The full proof can be found in the extended version of this paper \cite{spengler2024reachability}.
\end{proof}

It is easy to see that the converse statement of \autoref{lem:extend} cannot hold for all $\pid \in \indexset$.
Rather, we only show that under certain conditions the process player is able to visit the same local states of a process $\process^1$ in both $\game$ and $\game^\pid$.
The strategy to do so will take a specific play in $\game$ and mimic all instructions that have been played in $\process^\pid$, similar as in the previous proof.

Fix $\pid \in \indexset$.
Let $\sigma_A$ be a winning strategy for the process player and let $\sigma_B$ be the strategy of the update player where she never updates any buffer messages to the memory.
Consider the play $\play(\conf_0, \sigma_A, \sigma_B)$ in $\game$.
For $k = 1, 2, \dots$, let $\bar\conf_k \to \conf_k$ be the $k$-th time in this play where either the local state or the buffer of $\process^\pid$ changes.
This transition is due to some unique instruction $\statemap(\bar\conf_k)(\pid) \to[\instr_k] \statemap(\conf_k)(\pid)$ in $\process^\pid$.
In particular, it cannot be due to a memory update, since $\sigma_B$ was chosen that way.
Note that there does not necessarily need to be an infinite amount of $k$ with this property.
We define the strategy $\sigma^\pid_A$ as follows:
Whenever $\game^\pid$ is in the $k$-th round, the local state of the process is $\statemap(\bar\conf_k)(\pid)$ and $\instr_k$ is enabled, execute this instruction to move to the unique configuration with local state $\statemap(\conf_k)(\pid)$.
Otherwise, make an arbitrary move.
Let $\sigma^\pid_B$ be an arbitrary strategy of the update player for $\game^\pid$.
After the k-th round of $\play(\conf^\pid_0, \sigma^\pid_A, \sigma^\pid_B)$, the game is in some position $\conf^\pid_k$.

\begin{claim}
\label{claim:single-process}
    For all $k \in \Nat$ for which $\conf_k$ is defined, it holds that $\conf_k\restrict \to[\up\kstar] \conf^\pid_k$.
\end{claim}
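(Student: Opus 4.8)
The plan is to prove the claim by induction on $k$, tracking the relationship between the play $\play(\conf_0, \sigma_A, \sigma_B)$ in $\game$ (where $\sigma_B$ never updates) and the play $\play(\conf_0^\pid, \sigma_A^\pid, \sigma_B^\pid)$ in $\game^\pid$. The key invariant is $\conf_k\restrict \to[\up\kstar] \conf_k^\pid$: the projection of the concurrent configuration onto $\process^\pid$ can reach the single-process configuration by performing some buffer updates. Intuitively, in $\game$ the update player does nothing, so all write messages issued by $\process^\pid$ pile up in its buffer; in $\game^\pid$ the update player $\sigma_B^\pid$ may have flushed some of them to memory. Hence $\conf_k^\pid$ is "ahead" of $\conf_k\restrict$ by exactly those updates.

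For the base case $k = 0$, we have $\conf_0\restrict = \conf_0^\pid$ by definition, so $\conf_0\restrict \to[\up\kstar] \conf_0^\pid$ trivially (zero updates). For the induction step, assume $\conf_{k}\restrict \to[\up\kstar] \conf_{k}^\pid$ and consider the transition from round $k$ to round $k+1$. First I would observe that between $\bar\conf_k \to \conf_k$ and $\bar\conf_{k+1}$, the local state and buffer of $\process^\pid$ in $\game$ do not change, so $\conf_k\restrict = \bar\conf_{k+1}\restrict$ except possibly for the memory component — but since $\sigma_B$ performs no updates and no other process writes to memory in a way that... actually other processes *can* update their buffers only via the update player, who does nothing, so the memory is also unchanged; thus $\conf_k\restrict = \bar\conf_{k+1}\restrict$. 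The instruction $\instr_{k+1}$ fired at $\bar\conf_{k+1} \to \conf_{k+1}$ in $\process^\pid$; by construction $\sigma_A^\pid$ tries to fire the same instruction in round $k+1$ of $\game^\pid$, provided the local state matches (it does, by the induction hypothesis applied to the state component, which is preserved by $\to[\up\kstar]$) and $\instr_{k+1}$ is enabled in $\conf_k^\pid$.

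The main obstacle is the case analysis on $\instr_{k+1}$ to show enabledness in $\game^\pid$ and that the invariant is re-established. The delicate cases are $\rd$ and $\mf$. For a write or skip, enabledness in $\game^\pid$ is immediate from the matching local state, and the new buffers/memory on both sides differ again only by the same pending updates, so the invariant survives; one has to note that the process player in $\game^\pid$ moves to a $B$-configuration and then $\sigma_B^\pid$ makes its move, possibly adding more flushes to the $\to[\up\kstar]$ witness — this is fine since we only need *existence* of a sequence of updates. For a memory fence, $\instr_{k+1}$ being enabled in $\game$ means $\buffermap(\bar\conf_{k+1})(\pid) = \emptybuffer$, so $\buffermap(\conf_k\restrict)(\pid) = \emptybuffer$, and then $\conf_k\restrict \to[\up\kstar] \conf_k^\pid$ forces $\conf_k^\pid$ to have an empty buffer too (no updates are possible), so the fence is enabled there as well. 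For a read $\rd(\xvar,\dval)$: if $\process^\pid$ read its own write in $\game$, the buffer of $\process^\pid$ contains a matching message, and since $\conf_k^\pid$ differs only by having flushed a *prefix* (the oldest messages) of that buffer, either the most recent write on $\xvar$ is still in the buffer (same value read) or it has been flushed to memory (and then, since nothing else writes $\xvar$, the memory holds $\dval$) — either way the read of value $\dval$ is enabled. If $\process^\pid$ read from memory in $\game$, its buffer had no $\xvar$-message and memory held $\dval$; after flushing in $\conf_k^\pid$ the buffer still has no $\xvar$-message and memory still holds $\dval$. In all cases the resulting configurations again differ only by pending updates plus whatever $\sigma_B^\pid$ chooses to flush, so $\conf_{k+1}\restrict \to[\up\kstar] \conf_{k+1}^\pid$, completing the induction.
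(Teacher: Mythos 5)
Your proof is correct and takes essentially the same route as the paper's: an induction on $k$ that first establishes enabledness of $\instr_k$ at $\conf^\pid_{k-1}$ by a case analysis on the instruction, compares the buffer and memory of the two sides to re-establish $\conf_k\restrict \to[\up\kstar] \tilde\conf^\pid_k$ for the intermediate configuration, and then absorbs the update player's subsequent flushes into the $\up\kstar$ witness. The details you supply (memory never changes in $\game$ because $\sigma_B$ never updates, the flushed messages form a FIFO prefix so reads and fences stay enabled) are exactly the ones the paper defers to its extended version.
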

\begin{proof}[Proof idea.]
    First, we show by induction over $k$ that $\instr_k$ is enabled at $\conf^\pid_{k-1}$.
    Let $\tilde\conf^\pid_k$ be such that $\conf^\pid_{k-1} \to[\instr_k] \tilde\conf^\pid_k$.
    We compare the buffer and memory of $\conf_k$ and $\tilde\conf^\pid_k$ to conclude $\conf_k\restrict \to[\up\kstar] \tilde\conf^\pid_k$.
    The claim follows from $\tilde\conf^\pid_k \to[\up\kstar] \conf^\pid_k$.
    The full proof is again in the extended version \cite{spengler2024reachability}.
\end{proof}

\paragraph{Concurrent Games}
We combine the results for single-process games to obtain the following theorem.
\begin{theorem}
\label{thm:reduction}
    The process player wins $\game$ starting from a configuration $\conf_0$ if and only if she also wins $\game^\pid$ starting from configuration $\conf_0\restrict$ for at least one $\pid \in \indexset$.
\end{theorem}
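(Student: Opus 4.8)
The plan is to prove the two directions of the equivalence separately. The ``if'' direction --- that winning $\game^\pid$ from $\conf_0\restrict$ for some $\pid$ implies winning $\game$ from $\conf_0$ --- is exactly \autoref{lem:extend}, so nothing new is needed there. For the ``only if'' direction I would take a winning strategy $\sigma_A$ of the process player in $\game$ and play it against the update strategy $\sigma_B$ that never performs an update, obtaining the single play $\run = \play(\conf_0, \sigma_A, \sigma_B)$. Since the update player can always move (by updating zero messages), $\run$ cannot end in a deadlock of the update player, and since $\sigma_A$ is winning it cannot end in a deadlock of the process player either; hence $\run$ is infinite and satisfies the winning condition. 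Along $\run$ the memory is constant, and every move of the process player is an instruction of one fixed process.

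In the reachability case, $\run$ visits $\confset_\wincon$, so some process $\process^\pid$ is in a state of $\stateset_\wincon^\pid$ after finitely many changes to its state or buffer (at least one, because $\conf_0 \notin \confset_\wincon$). Fixing this $\pid$, I would take the strategy $\sigma^\pid_A$ for $\game^\pid$ that mimics the moves of $\process^\pid$ along $\run$, exactly as set up before \autoref{claim:single-process}. By that claim, for any update strategy $\sigma^\pid_B$ the play in $\game^\pid$ satisfies $\conf_k\restrict \to[\up\kstar] \conf^\pid_k$ for each $k$ for which $\conf_k$ is defined; in particular the mimicked instructions stay enabled, so the play really reaches the round in which $\process^\pid$ enters $\stateset_\wincon^\pid$ (local states are preserved by $\up\kstar$). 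Thus $\sigma^\pid_A$ is winning and the process player wins $\game^\pid$ from $\conf_0\restrict$.

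In the safety case, $\run$ avoids $\confset_\wincon$ forever, and here lies the main obstacle: a finite single-process play that deadlocks for the process player counts as a loss for her, so the mimicking strategy must produce an \emph{infinite} play. The key point is that, $\run$ being infinite and $\indexset$ finite, the process player executes infinitely many instructions of some fixed process $\process^\pid$ along $\run$; fixing this $\pid$, the strategy $\sigma^\pid_A$ can keep following $\process^\pid$ indefinitely. Once the state and buffer of $\process^\pid$ stop changing, this amounts to repeating a self-loop at a state outside $\stateset_\wincon$, which by the enabledness analysis inside \autoref{claim:single-process} remains available in $\game^\pid$, so the play stays safe and never deadlocks. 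Every configuration of the resulting play has $\process^\pid$ in a local state that also occurs along $\run$, hence outside $\stateset_\wincon^\pid$; so the play stays out of the target set against every $\sigma^\pid_B$, and $\sigma^\pid_A$ is winning. I expect this safety case --- ensuring the mimicked play is both infinite and safe in its tail --- to be the only step that does not follow directly from \autoref{lem:extend} and \autoref{claim:single-process}.
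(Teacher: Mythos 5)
Your proposal matches the paper's proof essentially step for step: the ``if'' direction is delegated to \autoref{lem:extend}, and the ``only if'' direction plays a winning $\sigma_A$ against the never-updating $\sigma_B$, picks the process $\process^\pid$ that reaches a target state (reachability) or acts infinitely often by pigeonhole (safety), and invokes \autoref{claim:single-process} to transfer the visited local states to $\game^\pid$. Your extra remarks on keeping the safety play infinite and safe in its tail only elaborate what the paper compresses into ``using the same arguments as previously,'' so the approach is the same and correct.
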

\begin{proof}
    By \autoref{lem:extend}, if the process player wins $\game^\pid$ for at least one $\pid \in \indexset$, then she also wins $\game$.
    For the other direction, consider the strategies as defined above.
    What is left to show is that $\sigma^\pid_A$ is winning for at least one $\pid \in \indexset$.

    If the process player has a reachability objective, $\play(\conf_0, \sigma_A, \sigma_B)$ visits at least one target state in some process $\process^\pid$.
    Note again that the update player cannot be deadlocked and therefore the play must be infinite.
    From \autoref{claim:single-process} it follows that $\play(\conf^\pid_0, \sigma^\pid_A, \sigma^\pid_B)$ visits the same local states of $\process^\pid$ than $\play(\conf_0, \sigma_A, \sigma_B)$ and in particular, it visits the same target state.
    Since $\sigma^\pid_B$ was chosen arbitrarily, it means that $\sigma^\pid_A$ is a winning strategy.
    Otherwise, if the process player has a safety objective, $\play(\conf_0, \sigma_A, \sigma_B)$ executes an infinite amount of instructions in at least one process $\process^\pid$, but never visits any of its target states.
    Using the same arguments as previously, it follows that $\play(\conf^\pid_0, \sigma^\pid_A, \sigma^\pid_B)$ is also a winning play and $\sigma^\pid_A$ is a winning strategy.
\end{proof}

\autoref{thm:reduction} reduces the reachability problem for games on concurrent programs to the single-process case.
Although the game $\game^\pid$ still has an infinite amount of configurations, many of them are indistinguishable in the sense that they have the same local state and allow the same sequences of instructions to be executed.
\autoref{sec:single-process} formally constructs a finite game that is a so-called \emph{bisimulation} of $\game^\pid$.
This shows that the reachability and safety problems for TSO games are decidable.
In fact, they are \pspace-complete.

\paragraph{Fairness Conditions}
In the previous section we have seen that the game played under TSO semantics reduces to the analysis of games on single-process programs.
This is somewhat unsatisfying, since those games do not exhibit any concurrent behaviour that arises from the communication between multiple processes.
The underlying reason is the structure of the optimal strategies of the two players:
If the process player wins, it is because she only plays in one single process.
Otherwise, the update player wins by never updating any buffer messages.
Both of these behaviours are not natural in the sense that they will not occur in any reasonable program environment:
We should be able to assume that eventually (1) every buffer message will be updated to the memory and (2) every process will execute an instruction.
In \autoref{sec:update-fairness} and \autoref{sec:process-fairness} we will impose additional restrictions on the two players to enforce this behaviour.

\section{Decidability of Single-Process Games}
\label{sec:single-process}

\paragraph{TSO Views}
In the single-process program $\program^\pid$, there is no communication between different processes.
A read operation of the process $\process^\pid$ on a variable $\xvar$ either reads the initial value from the (shared) memory, or the value of the last write on $\xvar$ done by $\process^\pid$, if such a write operation has happened.
In the latter case, the value of the read operation can come from either the buffer of $\process^\pid$ or directly from the memory.
But a single process cannot distinguish between these two cases.
To be exact, the information that the process can obtain from the buffer and the memory is the value that $\process^\pid$ can read from each variable, and whether the process can execute a memory fence instruction or not.
Together with the local state of $\process^\pid$ at the current configuration, this completely determines the enabled transitions for the process.

We call this concept the \emph{view} of the process on the (concurrent) system and define it formally as a tuple $\view = \tuple{ \statemap, \valuemap, \fencemap }$, where:
\begin{itemize}
    \item $\statemap \in \stateset^\pid$ is the local state of $\process^\pid$.
    \item $\valuemap: \varset \to \valset$ defines which value $\process^\pid$ reads from each variable.
    \item $\fencemap \in \set{ \true, \false }$ represents the possibility to perform a memory fence instruction.
\end{itemize}
Given a view $\view = \tuple{ \statemap, \valuemap, \fencemap }$, we write $\statemap\of\view$, $\valuemap\of\view$ and $\fencemap\of\view$ for the local state $\statemap$, the value state $\valuemap$ and the fence state $\fencemap$ of $\view$, respectively.
The view of a configuration $\conf \in \confset^\pid$ is denoted by $\view\of\conf$ and defined in the following way.
First, $\statemap(\view\of\conf) = \statemap(\conf)$.
For all $\xvar \in \varset$, if $\buffermap(\conf)|_{\set\xvar\times\valset} = \tuple\xd \cdot \word$, then $\valuemap(\view\of\conf)(\xvar) = \dval$.
Otherwise, $\valuemap(\view\of\conf)(\xvar) = \memorymap(\conf)\of\xvar$.
Lastly, $\fencemap(\view\of\conf) = \true$ if and only if $\buffermap(\conf) = \varepsilon$.

For $\conf_1, \conf_2 \in \confset$, if $\view(\conf_1) = \view(\conf_2)$, then we write $\conf_1 \equiv \conf_2$.
In such a case, the process $\process^\pid$ cannot differentiate between $\conf_1$ and $\conf_2$ in the sense that the enabled transitions in both configurations are the same.
This is shown in \autoref{fig:views} and formally captured in \autoref{lem:views}.

\begin{figure}[h]
\centering
\begin{tikzpicture}[xscale=3, yscale=-2]
    \node at (0,0) (c1) {$\conf_1$};
    \node at (1,0) (c2) {$\conf_3$};
    \node at (0,1) (c3) {$\conf_2$};
    \node at (1,1) (c4) {$\exists\ \conf_4$};

    \draw[->] (c1) -- node[above] {$\lbl$} (c2);
    \draw[->] (c3) -- node[above] {$\lbl$} (c4);
    \draw[- ] (c1) -- node[fill=white] {$\equiv$} (c3);
    \draw[- ] (c2) -- node[fill=white] {$\equiv$} (c4);
\end{tikzpicture}
\caption{The configurations of \autoref{lem:views}.}
\label{fig:views}
\end{figure}
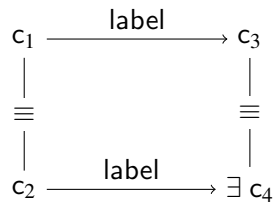

\begin{lemma}
\label{lem:views}
    For all $\conf_1, \conf_3, \conf_2 \in \confset^\pid$ and $\lbl \in \instrs \cup \set{ \up\kstar }$ with $\conf_1 \equiv \conf_2$ and $\conf_1 \to[\lbl] \conf_3$,
    there exists a $\conf_4 \in \confset^\pid$ such that $\conf_3 \equiv \conf_4$ and $\conf_2 \to[\lbl] \conf_4$.
\end{lemma}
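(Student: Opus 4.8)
The statement is a standard "bisimulation-style" commutation property: if two configurations have the same view and one takes a labelled step, the other can match it to a configuration with the same view. I would prove it by a case analysis on the label $\lbl$, in each case explicitly constructing $\conf_4$ from $\conf_2$ and the given step $\conf_1 \to[\lbl] \conf_3$, then verifying that $\view(\conf_3) = \view(\conf_4)$. The key observation driving every case is that whether a transition of $\process^\pid$ is enabled at a configuration $\conf$ depends only on $\view(\conf)$: a read $\rd(\xd)$ is enabled iff $\statemap(\conf) \to[\rd(\xd)] \state'$ in the process and $\valuemap(\view(\conf))(\xvar) = \dval$; a memory fence is enabled iff $\statemap(\conf)\to[\mf]\state'$ and $\fencemap(\view(\conf)) = \true$; writes and skips depend only on the local state. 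Since $\conf_1 \equiv \conf_2$ means $\view(\conf_1) = \view(\conf_2)$ — in particular $\statemap(\conf_1) = \statemap(\conf_2)$ — the same process transition that justifies $\conf_1 \to[\lbl]\conf_3$ is available at $\conf_2$.

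First I would dispatch the easy instruction cases. For $\lbl = \nop_\pid$ or $\lbl = \wr(\xd)_\pid$: the transition is enabled at $\conf_2$ because it only needs the local state, so let $\conf_4$ be the result of applying the same rule to $\conf_2$. For $\nop$, nothing but the local state changes, so $\valuemap$ and $\fencemap$ are unaffected in both $\conf_1\to\conf_3$ and $\conf_2\to\conf_4$, giving $\view(\conf_3) = \view(\conf_4)$. For $\wr(\xd)$, a message $\tuple\xd$ is prepended to the buffer in both; this sets $\valuemap(\xvar)$ to $\dval$ in both (the newest buffered value on $\xvar$ is now $\dval$), leaves $\valuemap$ on other variables unchanged (if $\conf_1,\conf_2$ agreed on $\valuemap(\yvar)$ they still do, since a prepend cannot change the most-recent buffered write on $\yvar\neq\xvar$, and if neither had a buffered write on $\yvar$ the memory is untouched), and sets $\fencemap = \false$ in both. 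For $\lbl = \rd(\xd)_\pid$ and $\lbl = \mf_\pid$: enabledness at $\conf_2$ follows from the view equality as noted above; neither rule modifies the buffer or memory, so again $\valuemap$ and $\fencemap$ are preserved and only the (shared) local state changes, yielding $\view(\conf_3) = \view(\conf_4)$.

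The remaining and only delicate case is $\lbl = \up\kstar$, i.e. some sequence of buffer updates. Here I would argue that from $\conf_2$ we can choose a possibly different number of updates to reach a $\conf_4$ with $\view(\conf_4) = \view(\conf_3)$. The subtlety is that $\conf_1$ and $\conf_2$ may have completely different buffer contents and memory states yet the same view, so a move-for-move copy of the update sequence need not work. The right approach: updates only ever remove messages from the buffer tail and push their values to memory, which can only "reveal" older buffered values or, once the buffer is empty, expose memory. Since $\view(\conf_1) = \view(\conf_2)$, for each variable $\xvar$ the value $\process^\pid$ reads is the same, and the crucial point is that performing updates in $\conf$ changes $\valuemap(\view(\conf))(\xvar)$ only once the buffer segment on $\xvar$ is fully drained, at which moment it equals the value of the last write on $\xvar$ — which, by FIFO order, is exactly the value that was being read before draining completed. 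Hence $\valuemap$ is actually invariant under any number of updates, and $\fencemap$ only flips from $\false$ to $\true$ when the buffer becomes empty. So: if $\conf_1 \to[\up\kstar]\conf_3$ leaves $\fencemap(\view(\conf_3)) = \false$, pick $\conf_4 = \conf_2$ (zero updates) — legal since $\up\kstar$ includes the empty update — and $\view(\conf_4) = \view(\conf_2) = \view(\conf_1) = \view(\conf_3)$. If instead $\fencemap(\view(\conf_3)) = \true$ (the buffer of $\process^\pid$ got emptied), then fully drain the buffer of $\conf_2$ to get $\conf_4$; emptiness gives $\fencemap(\view(\conf_4)) = \true$, the local state is unchanged in both, and $\valuemap$ is invariant as just argued, so the memory of $\conf_4$ now holds exactly the read-values common to $\conf_1$ and $\conf_2$, matching $\conf_3$. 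I expect this $\up\kstar$ case — specifically, pinning down that $\valuemap$ is invariant under updates and that $\fencemap$ can be matched by fully draining — to be the main obstacle; the instruction cases are routine bookkeeping on the buffer restriction $\buffermap|_{\set\xvar\times\valset}$.
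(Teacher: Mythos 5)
Your proof is correct, and for the instruction labels it follows the same route as the paper: enabledness at $\conf_2$ is read off the view components (local state for $\nop$ and $\wr$, the value map for $\rd$, the fence flag for $\mf$), and the view of the successor is tracked through one rule application, with $\wr\of\xd$ being the only case that alters $\valuemap$ and $\fencemap$. Where you genuinely diverge is the $\up\kstar$ case. The paper dispatches it in one line by taking $\conf_4 = \conf_2$ (zero updates), which implicitly asserts that updates preserve the view, i.e.\ that $\conf_3 \equiv \conf_1$. As you correctly isolate, $\valuemap$ is indeed invariant under updates (draining the last buffered write on $\xvar$ installs in memory exactly the value that was previously read from the buffer, by the FIFO order), but $\fencemap$ can flip from $\false$ to $\true$ when the update sequence empties the buffer, so the uniform choice $\conf_4 = \conf_2$ does not match $\conf_3$ in that subcase. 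Your case split --- zero updates on $\conf_2$ when $\buffermap(\conf_3)$ is still nonempty, a full drain of $\conf_2$'s buffer when it is empty --- is the right construction and is strictly more careful than the published one-liner; it buys a proof that actually covers the emptying case, at the cost of having to state and use the $\valuemap$-invariance of updates explicitly. No gaps remain in your argument.
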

\begin{proof}
    If $\lbl = \up\kstar$, this clearly holds for $\conf_4 = \conf_2$.
    Otherwise, we first show that $\lbl \in \instrs$ is enabled at $\conf_2$.
    Since $\conf_1 \equiv \conf_2$, it holds that $\statemap(\conf_1) = \statemap(\conf_2)$.
    Furthermore, if $\lbl = \rd\of\xd$, then $\valuemap(\view(\conf_1))(\xvar) = \valuemap(\view(\conf_2))(\xvar) = \dval$.
    Also, if $\lbl = \mf$, then $\buffermap(\view(\conf_1)) = \varepsilon$ and since $\fencemap(\view(\conf_1)) = \fencemap(\view(\conf_2)) = \true$ it follows that $\buffermap(\view(\conf_2)) = \varepsilon$.
    From these considerations and the definition of the TSO semantics (see \autoref{fig:tso-semantics}), it follows that $\lbl$ is indeed enabled at $\conf_2$.

    Let $\conf_4$ be the unique configuration obtained after executing the transition $\statemap(\conf_1) \to[\lbl] \statemap(\conf_3)$ at $\conf_2$, i.e. $\conf_2 \to[\lbl] \conf_4$ and $\statemap(\conf_4) = \statemap(\conf_3)$.
    If $\lbl = \wr\of\xd$, then $\valuemap(\view(\conf_4)) = \valuemap(\view(\conf_3)) = \valuemap(\view(\conf_1))[\xvar \gets \dval]$
    and $\fencemap(\view(\conf_4)) = \fencemap(\view(\conf_3)) = \false$.
    Otherwise, $\valuemap(\view(\conf_4)) = \valuemap(\view(\conf_3)) = \valuemap(\view(\conf_1))$
    and $\fencemap(\view(\conf_4)) = \fencemap(\view(\conf_3)) = \fencemap(\view(\conf_1))$.
    In all cases it follows that $\conf_3 \equiv \conf_4$.
\end{proof}

\paragraph{Bisimulations}
A \emph{colouring} of a game $\game$ is a function $\lambda: \confset \to \colset$ from the set of configurations into some set of colours $\colset$.
Consider two games $\game$ and $\game'$ with colouring functions $\lambda$ and $\lambda'$, respectively.
A \emph{bisimulation} (also called \emph{zig-zag relation}) between $\game$ and $\game'$ is a relation $\bisim \subseteq \confset \times \confset'$ such that for all pair of related configurations $(\conf_1, \conf_2) \in \bisim$ it holds that:
\begin{itemize}
    \item $\conf_1$ and $\conf_2$ agree on their colour: $\lambda(\conf_1) = \lambda'(\conf_2)$
    \item (\emph{zig}) for each transition $\conf_1 \to[\lbl] \conf_3$ there is a transition $\conf_2 \to[\lbl] \conf_4$ such that $(\conf_3, \conf_4) \in \bisim$.
    \item (\emph{zag}) for each transition $\conf_2 \to[\lbl] \conf_4$ there is a transition $\conf_1 \to[\lbl] \conf_3$ such that $(\conf_3, \conf_4) \in \bisim$.
\end{itemize}
We say that two related configurations $\conf_1$ and $\conf_2$ are \emph{bisimilar} and write $\conf \approx \conf'$.
We call $\game$ and $\game'$ \emph{bisimilar} if there is a bisimulation between them.

It is common knowledge in game theory that winning strategies are preserved under bisimulations if the colourings are a refinement of the winning condition in the following sense.
Consider two bisimilar games $\game$ and $\game'$ with winning conditions given by $\confset_\wincon$ and $\confset'_\wincon$, respectively.
Let $\lambda$ be a colouring function for $\game$ such that the configurations in $\confset_\wincon$ have different colours than the rest of the configurations, i.e. $\lambda^{-1}(\lambda(\confset_\wincon)) = \confset_\wincon$.
Define $\lambda'$ as a colouring function for $\game'$ accordingly.

\begin{lemma}
    Given two bisimilar configurations $\conf_0 \in \confset$ and $\conf_0' \in \confset'$, it holds that $\conf_0$ is a winning configuration in $\game$ if and only if $\conf_0'$ is a winning configuration in $\game'$.
\end{lemma}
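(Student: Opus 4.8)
The plan is to build a winning strategy in $\game'$ out of a winning strategy in $\game$ by playing "along the bisimulation". Fix a bisimulation $\bisim$ with $(\conf_0, \conf_0') \in \bisim$, and suppose $\conf_0$ is winning for player A via a (positional, by \autoref{lem:positional}) strategy $\sigma$. I would define a strategy $\sigma'$ in $\game'$ together with a map that, to each prefix $\conf_0', \conf_1', \dots, \conf_n'$ of a $\sigma'$-play, associates a "shadow" prefix $\conf_0, \conf_1, \dots, \conf_n$ in $\game$ that is a prefix of a $\sigma$-play and satisfies $(\conf_i, \conf_i') \in \bisim$ for all $i$. When it is player A's turn at $\conf_n'$ (equivalently, by the colouring/partition being respected, at $\conf_n$), we let $\sigma$ pick $\conf_{n+1} = \sigma(\conf_n)$ and use the \emph{zig} property of $\bisim$ to choose $\conf_{n+1}'$ with $\conf_n' \to \conf_{n+1}'$ and $(\conf_{n+1}, \conf_{n+1}') \in \bisim$; this defines $\sigma'$. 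When it is player B's turn and the adversary picks $\conf_{n+1}'$ with $\conf_n' \to \conf_{n+1}'$, we use the \emph{zag} property to extend the shadow by some $\conf_{n+1}$ with $\conf_n \to \conf_{n+1}$ and $(\conf_{n+1}, \conf_{n+1}') \in \bisim$.

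The key points to check, in order: first, that the construction is well-defined — this needs that related configurations have the same "turn", which follows since the colouring refines the winning condition only, so I additionally need the standard (and here implicit) assumption that $\lambda$ also records which of $\confset_A, \confset_B$ a configuration lies in; granting that, membership in $\confset_A$ versus $\confset_B$ is the same for $\conf_i$ and $\conf_i'$, and moreover a configuration is a deadlock iff its partner is (zig/zag transfer enabledness both ways). Second, by induction along the play the shadow play is a legitimate $\sigma$-play. Third, the shadow play is won by A: if it is finite it ends in a deadlock $\conf_n$, which must lie in $\confset_B$ since $\sigma$ is winning, hence $\conf_n'$ is also a deadlock in $\confset_B$ and A wins the $\sigma'$-play; if it is infinite, then since $\lambda^{-1}(\lambda(\confset_\wincon)) = \confset_\wincon$ and $(\conf_i,\conf_i')\in\bisim$ implies $\lambda(\conf_i) = \lambda'(\conf_i')$, we get $\conf_i \in \confset_\wincon \iff \conf_i' \in \confset'_\wincon$ for every $i$. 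For a reachability condition this means the $\sigma'$-play hits $\confset'_\wincon$ exactly when the shadow hits $\confset_\wincon$, which it does; for a safety condition it means the $\sigma'$-play avoids $\confset'_\wincon$ exactly when the shadow avoids $\confset_\wincon$, which it does. In all cases the arbitrary $\sigma'$-play is won by A, so $\conf_0'$ is winning.

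The converse direction is symmetric: $\bisim^{-1}$ is a bisimulation between $\game'$ and $\game$ with the analogous colouring hypothesis, so the same argument starting from a winning strategy at $\conf_0'$ yields one at $\conf_0$. I expect the only real subtlety to be the bookkeeping that the partition into A- and B-configurations is respected by $\bisim$ (so that "whose turn is it" transfers across $\approx$) and that deadlocks correspond to deadlocks; once those are pinned down the rest is the routine zig-zag induction, and the reachability/safety cases are handled uniformly by the single observation that $\bisim$ preserves membership in the winning set via the colouring.
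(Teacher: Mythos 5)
Your proof is correct and is exactly the standard zig-zag strategy-transfer argument; the paper states this lemma without any proof, appealing to it as common knowledge, so there is nothing to diverge from. You rightly flag the one genuine subtlety, namely that the bisimulation must also respect the partition into $\confset_A$ and $\confset_B$ so that the turn structure (and the deadlock-for-whom test) transfers; the paper leaves this implicit, but it holds in its application since the view bisimulation preserves the $A$/$B$ annotation by construction.
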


We define a game on views $\game^\viewset = \tuple{ \viewset, \viewset_A, \viewset_B, \to_\view}$ that is bisimilar to the single-process game $\game^\pid$.
Let $\viewset_X = \set{ \view(\conf)_X \mid \conf \in \confset^\pid }$ for $X \in \set{A,B}$ and $\viewset = \viewset_A \cup \viewset_B$.
We extend the notation of the function $\view$ to game configurations by $\view(\conf_X) = \view(\conf)_X$ for $X \in \set{A,B}$ and $\conf \in \confset^\pid$.
Now, we can define $\to_\view$ by $\view(\conf) \to[\lbl]_\view \view(\conf')$ if and only if $\conf \to[\lbl] \conf'$ for some $\conf, \conf' \in \confset^\pid$.

\begin{theorem}
    The relation $\bisim = \set{ (\conf, \view\of\conf) \mid \conf \in \confset^\pid} \subset \confset^\pid \times \confset^\viewset$
    is a bisimulation between $\game^\pid$ and $\game^\viewset$
    with colouring functions $\lambda^\pid: \confset^\pid \to \viewset, \conf \mapsto \view(\conf)$ and $\lambda^\viewset = \operatorname{id}_\viewset$, respectively.
\end{theorem}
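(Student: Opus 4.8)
The plan is to verify the three bisimulation conditions for $\bisim = \set{ (\conf, \view\of\conf) \mid \conf \in \confset^\pid}$ directly from the definitions, using \autoref{lem:views} as the workhorse for the transition-matching parts. Throughout I would keep in mind that $\game^\viewset$ was defined so that its transitions are exactly the images of transitions of $\game^\pid$ under $\view$, which makes most conditions almost tautological — the content is really in checking that this quotient is well-defined, i.e.\ that collapsing $\equiv$-classes does not merge a player-$A$ configuration with a player-$B$ one, does not create spurious transitions, and does not destroy the alternation structure.

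First I would fix notation: for $\conf \in \confset^\pid_X$ with $X \in \set{A,B}$, its partner under $\bisim$ is $\view(\conf_X) = \view(\conf)_X \in \viewset_X$. The colouring condition is immediate: $\lambda^\pid(\conf_X) = \view(\conf)_X = \lambda^\viewset(\view(\conf)_X)$ since $\lambda^\viewset = \operatorname{id}$, so related configurations trivially agree on colour. For the \emph{zig} direction, suppose $\conf_X \to[\lbl] \conf'_Y$ in $\game^\pid$. By the definition of $\to_\view$ we immediately get $\view(\conf_X) \to[\lbl]_\view \view(\conf'_Y)$, and $(\conf'_Y, \view(\conf'_Y)) \in \bisim$ by construction, so zig holds directly. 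The only subtlety worth a sentence is that one must check the $A/B$ annotation is respected: a Program transition sends $\confset_A^\pid$ to $\confset_B^\pid$ and an Update transition sends $\confset_B^\pid$ to $\confset_A^\pid$, and since $\view$ preserves the annotation, $\to_\view$ inherits the same alternation, so $\game^\viewset$ is indeed a well-formed game.

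The \emph{zag} direction is where \autoref{lem:views} is needed. Suppose $(\conf_1, \view(\conf_1)) \in \bisim$ and $\view(\conf_1) \to[\lbl]_\view \mu$ in $\game^\viewset$. By definition of $\to_\view$ this transition comes from \emph{some} pair $\conf_2 \to[\lbl] \conf_4$ in $\game^\pid$ with $\view(\conf_2) = \view(\conf_1)$ and $\view(\conf_4) = \mu$. Now $\view(\conf_2) = \view(\conf_1)$ means $\conf_1 \equiv \conf_2$ (in the annotated sense — and here one notes both lie in the same $\confset_X^\pid$, else the annotations would differ). The transition relation of $\game^\pid$ is built from the Program rules (an $\instr \in \instrs$) and the Update rule ($\up\kstar$), so $\lbl \in \instrs \cup \set{\up\kstar}$, which is exactly the hypothesis of \autoref{lem:views}. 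Applying that lemma to $\conf_1 \equiv \conf_2$ and $\conf_2 \to[\lbl] \conf_4$ yields a $\conf_3$ with $\conf_1 \to[\lbl] \conf_3$ and $\conf_3 \equiv \conf_4$; the annotations match too since the rule applied is the same. Then $\view(\conf_3) = \view(\conf_4) = \mu$, so $(\conf_3, \mu) \in \bisim$, completing zag. I would phrase the lemma application carefully for both the Program case (where $\lbl$ is an instruction and the target local state is determined) and the Update case ($\lbl = \up\kstar$, handled by the first line of the \autoref{lem:views} proof).

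The main obstacle, and the place I would spend the most care, is making precise that \autoref{lem:views} — stated for configurations of $\confset^\pid$ without the $A/B$ annotation — transfers cleanly to the \emph{game} $\game^\pid$ with its partitioned configuration set and strictly alternating transitions. Concretely: (i) $\view(\conf_1) = \view(\conf_2)$ as \emph{game} configurations forces them into the same partition class, so the annotation bookkeeping is automatic rather than an extra hypothesis; (ii) every game transition is labelled either by some $\instr_\pid$ (a Program move, i.e.\ $\lbl \in \instrs$ after dropping the process subscript, which is unique in the single-process setting) or by $\up\kstar$ (an Update move), so the label genuinely ranges over $\instrs \cup \set{\up\kstar}$ and \autoref{lem:views} applies verbatim; and (iii) since $\game^\viewset$'s transition relation is \emph{defined} as the $\view$-image of $\game^\pid$'s, zig is free and only zag requires the lemma. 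Once these three observations are spelled out, the proof is short. I would close by noting that, combined with the preceding lemma on preservation of winning strategies under bisimulation and the fact that $\viewset$ is finite (bounded by $|\stateset^\pid| \cdot |\valset|^{|\varset|} \cdot 2$), this establishes decidability of the single-process games, and hence — via \autoref{thm:reduction} — of the concurrent TSO reachability and safety games.
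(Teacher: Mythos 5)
Your proposal is correct and follows essentially the same route as the paper: the colouring and \emph{zig} conditions hold by construction of $\game^\viewset$, and \emph{zag} is obtained by taking a witnessing transition $\altconf \to[\lbl] \tilde\altconf$ with $\view(\altconf)=\view(\conf)$ and applying \autoref{lem:views} to transfer it to $\conf$. Your additional bookkeeping about the $A/B$ annotations and alternation is a reasonable elaboration of details the paper leaves implicit, but it does not change the argument.
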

\begin{proof}
    From the definition it follows directly that related configurations agree on their colour and that $\game^\viewset$ can simulate $\game^\pid$.
    What is left to show is that $\game^\pid$ can also simulate $\game^\viewset$, i.e. that for all $\conf \approx \view$ and $\view \to[\lbl]_\view \tilde\view$ there is $\conf \to[\lbl] \tilde\conf$ with $\tilde\conf \approx \tilde\view$.
    The transition $\view \to[\lbl]_\view \tilde\view$ is due to a transition $\altconf \to[\lbl] \tilde\altconf$ for some $\altconf, \tilde\altconf \in \confset^\pid$ with $\altconf \approx \view$ and $\tilde\altconf \approx \tilde\view$.
    Since $\view\of\conf = \view = \view\of\altconf$, it follows that $\conf \equiv \altconf$.
    Apply \autoref{lem:views} to $\conf_1 = \altconf$, $\conf_2 = \conf$ and $\conf_3 = \tilde\altconf$ to obtain a configuration $\tilde\conf = \conf_4$ with the desired properties.
\end{proof}

Since $\confset^\viewset$ is finite, it is rather evident that the reachability and safety problem are decidable, e.g. by applying a backward induction algorithm.
In fact, both problems are $\pspace$-complete.
Intuitively, this makes sense since each variable can be seen as a single cell of a bounded Turing machine.
The extended version of this paper \cite{spengler2024reachability} gives a polynomial-space algorithm to show the upper complexity bound and constructs a reduction from TQBF for the lower bound.
These results then immediately translate to the single-process TSO game $\game^\pid$, since it is bisimilar to $\game^\viewset$.

\section{Update Fairness in Reachability Games}
\label{sec:update-fairness}

In this section, we introduce \emph{update fairness}, which we require the update player to satisfy.
The core idea of update fairness is that eventually, each buffer message will be updated to the shared memory.
This means that the process player can in some sense \emph{wait} for the buffer messages to arrive in the memory.
In safety games, delaying the run indefinitely favours the process player.
Thus, we will focus on reachability games in this section.

We implement update fairness as follows.
Whenever the program is in a configuration at which no program instruction is enabled (a deadlock), the system waits for the update player to update buffer messages to the memory, until the program exits the deadlock (or all buffers are empty).
To simplify the formalisation, we will assume that the update player does so in her very next turn.
This idea can be equivalently expressed by saying that if it is the process player's turn and the system is deadlocked, then it follows that all buffers must be empty.

Let $\program = \tuple{\process^\pid}_{\pid\in\indexset}$ be a TSO program with induced game $\game\of\program$.
We define $\wincon_U$ as the set of all plays $\play = \conf_0, \conf_1, \dots$ in $\game\of\program$ that satisfy update fairness:
$$ \play \in \wincon_U \iff \forall\ k \in \Nat, \conf_k \in \confset_A: \left(
\post(\conf_k) = \emptyset \implies \forall\ \pid\in\indexset: \buffermap(\conf_k)(\pid) = \varepsilon
\right) $$
For a reachability condition $\wincon_R$, let the set $\wincon_{RU} = \wincon_R \cup \overline{\wincon_U} = \set{\play \mid \play \in \wincon_R \lor \play \not\in \wincon_U}$ be the set of winning plays for the process player, i.e. the set of all plays that either reach a target state or that do not admit update fairness.
The remainder of this section will be dedicated to show that the reachability problem under update fairness is undecidable.
We will achieve this by reducing the state reachability problem of perfect channel systems, which is undecidable, to the reachability problem of $\game\of\program$ with respect to $\wincon_{RU}$.
The main ideas of the reduction are similar to those in \cite{DBLP:journals/corr/abs-2310-00990}.

Given a perfect channel system $\channelsystem = \tuple{ \channelstateset, \channelmessageset, \transition }$, we construct a TSO program $\program$ that simulates $\channelsystem$.
The process player will decide which transitions of the PCS to simulate, while the update player only takes care of the buffer updates.
The program consists of two processes $\process^1$ and $\process^2$, which are shown in \autoref{fig:ru-reduction-1} and \autoref{fig:ru-reduction-2}, respectively.

\begin{figure}[tbh]
\centering
\begin{minipage}[b]{.45\textwidth}
    \centering

\begin{subfigure}{\linewidth}
\centering
\begin{tikzpicture}[yscale=-1]
    \node at (0,0) (q1) {$\channelstate$};
    \node at (0,1) (q2) {$\channelstate'$};

    \draw[->] (q1) -- node[right] {$\nop$} (q2);
\end{tikzpicture}
\caption{skip operation $\channelstate \to[\nop]_\channelsystem \channelstate'$}
\end{subfigure}
\bigskip
\bigskip

\begin{subfigure}{\linewidth}
\centering
\begin{tikzpicture}[yscale=-1]
    \node at (0,0) (q1) {$\channelstate$};
    \node at (0,1) (h1) {$\hstate_1$};
    \node at (0,2) (q2) {$\channelstate'$};

    \draw[->] (q1) -- node[right] {$\wr(\xwr,\channelmessage)$} (h1);
    \draw[->] (h1) -- node[right] {$\wr(\yvar,1)$} (q2);
\end{tikzpicture}
\caption{send operation $\channelstate \to[!\channelmessage]_\channelsystem \channelstate'$}
\end{subfigure}
\bigskip
\bigskip

\begin{subfigure}{\linewidth}
\centering
\begin{tikzpicture}[yscale=-1]
    \node at (0,0) (q1) {$\channelstate$};
    \node at (0,1) (h1) {$\hstate_1$};
    \node at (0,2) (h2) {$\hstate_2$};
    \node at (0,3) (q2) {$\channelstate'$};

    \draw[->] (q1) -- node[right] {$\nop$} (h1);
    \draw[->] (h1) -- node[right] {$\rd(\xrd,\channelmessage)$} (h2);
    \draw[->] (h2) -- node[right] {$\rd(\xrd,\bot)$} (q2);
\end{tikzpicture}
\caption{receive operation $\channelstate \to[?\channelmessage]_\channelsystem \channelstate'$}
\label{fig:ru-reduction-1c}
\end{subfigure}
\caption{$\process^1$ of the reduction from PCS.}
\label{fig:ru-reduction-1}
\end{minipage}%
\hfill
\begin{minipage}[b]{.5\textwidth}
    \centering
\begin{tikzpicture}[yscale=-1]
    \node at (0,1) (q1) {$\state_1$};
    \node at (0,2) (q2) {$\state_\channelmessage$};
    \node at (0,3) (q3) {$\state_3$};
    \node at (0,4) (q4) {$\state_4$};
    \node at (0,5) (q5) {$\state_5$};
    \node at (0,6) (q6) {$\state_6$};
    \node at (0,7) (q7) {$\state_7$};
    \node at (0,8) (q8) {$\state_8$};
    \node at (0,9) (q9) {$\state_9$};
    \node at (0,10) (q10) {$\state_{10}$};
    \node at (0,11) (q11) {$\state_{11}$};
    \node at (0,12) (q1') {$\state_1$};

    \draw[->] (q1) -- node[right] {$\rd(\xwr,\channelmessage)$} (q2);
    \draw[->] (q2) -- node[right] {$\wr(\xrd,\channelmessage)$} (q3);
    \draw[->] (q3) -- node[right] {$\wr(\xwr,\bot)$} (q4);
    \draw[->] (q4) -- node[right] {$\mf$} (q5);
    \draw[->] (q5) -- node[right] {$\rd(\yvar,0)$} (q6);
    \draw[->] (q6) -- node[right] {$\rd(\yvar,1)$} (q7);
    \draw[->] (q7) -- node[right] {$\wr(\yvar,0)$} (q8);
    \draw[->] (q8) -- node[right] {$\mf$} (q9);
    \draw[->] (q9) -- node[right] {$\rd(\xwr,\bot)$} (q10);
    \draw[->] (q10) -- node[right] {$\wr(\xrd,\bot)$} (q11);
    \draw[->] (q11) -- node[right] {$\mf$} (q1');

    \node at (5,5) (qf) {$\state_F$};
    \draw[->] (q5) -- node[above right] {$\rd(\yvar,1)$} (qf);

    \node at (5,6) (qf) {$\state_F$};
    \draw[->] (q6) -- node[above right] {$\rd(\yvar,0)$} (qf);

    \node at (5,9) (qf) {$\state_F$};
    \draw[->] (q9) -- node[above right] {$\rd(\xwr,\channelmessage)$} (qf);

    \draw[thick,decoration={brace, mirror},decorate] (q1.south west) -- node[left]{for all $\channelmessage \in \channelmessageset$} (q3.north west);
\end{tikzpicture}
\caption{$\process^2$ of the reduction from PCS.}
\label{fig:ru-reduction-2}
\end{minipage}
\end{figure}

The first process keeps track of the configuration of the channel system and simulates the control flow.
For each transition in $\channelsystem$, we construct a sequence of transitions in $\process^1$ that simulates both the state change and the channel behaviour of the $\channelsystem$-transition.
To achieve this, $\process^1$ uses its buffer to store the messages of the PCS's channel.
In particular, to simulate a send operation $!\channelmessage$, $\process^1$ adds the message $\tuple{\xwr, \channelmessage}$ to its buffer.
For receive operations, $\process^1$ cannot read its own oldest buffer message, since it is overshadowed by the more recent messages.
Thus, the program uses the second process $\process^2$ to read the message from memory and copies it to the variable $\xrd$, where it can be read by $\process^1$.
We call the combination of reading a message $\channelmessage$ from $\xwr$ and writing it to $\xrd$ the \emph{rotation} of $\channelmessage$.

While this is sufficient to simulate all behaviours of the PCS, it also allows for additional behaviour that is not captured by $\channelsystem$.
More precisely, we need to ensure that each channel message is received \emph{once and only once}.
Equivalently, we need to prevent the \emph{loss} and \emph{duplication} of messages.
This can happen due to multiple reasons.

First, the update player might choose to lose a channel message by updating more than one message during a rotation.
Consider an execution of $\program$ that simulates two send operations $!\channelmessage_1$ and $!\channelmessage_2$, i.e. $\process^1$ adds $\tuple{\xwr, \channelmessage_1}$ and $\tuple{\xwr, \channelmessage_2}$ to its buffer.
Now, if the process player wants to simulate a receive operation and initiates a message rotation, the update player can update both messages $\tuple{\xwr, \channelmessage_1}$ and $\tuple{\xwr, \channelmessage_2}$ to the memory before $\process^2$ reads from $\xwr$.
Thus, the first message $\channelmessage_1$ is overwritten by the second message $\channelmessage_2$ and is lost without ever being received.
To prevent this, we implement a protocol that ensures that in each message rotation, exactly one channel message is being updated.

We extend the construction of $\process^1$ such that it inserts an auxiliary message $\tuple{\yvar, 1}$ into its buffer after the simulation of each send operation.
After a message rotation, that is, after $\process^2$ copied a message from $\xwr$ to $\xrd$, the process then resets the value of $\xwr$ to its initial value $\bot$.
Next, the process checks that $\yvar$ contains the value $0$, which indicates that only one message was updated to the memory.
Now, the update player is allowed to update exactly one $\tuple{\yvar, 1}$ buffer message, after which $\process^2$ resets $\yvar$ to $0$.
To ensure that the update player has actually updated only one message in this step, $\process^2$ then checks that $\xwr$ is still empty.
If this protocol is violated at any point, $\process^2$ enables the process player to immediately move to a winning state.

Although we have established that during each message rotation exactly one channel message will be rotated, we also need to ensure that for each rotation, $\process^1$ will simulate exactly one receive operation.
This is achieved by another protocol between $\process^1$ and $\process^2$, which gives the update player the tools to enforce correct behaviour.
To begin, the process player needs to initiate the simulation of a receive operation by moving to the first auxiliary state $\hstate_1$ shown in \autoref{fig:ru-reduction-1c}.
Only then is the program in a deadlock and the update player is forced to perform a message update.
When reaching the first memory fence in $\process^2$, the system is deadlocked again.
Of course, the update player will not update the next message in the buffer of $\process^1$, since it will lead to the process player immediately winning later on.
Thus, she updates the message $\tuple{\xrd, \channelmessage}$ to the memory, which enables both processes to continue.
The next time that the update player is forced to update is when $\process^1$ reaches the second auxiliary state $\hstate_2$ and $\process^2$ reaches the second memory fence.
Only emptying the buffer of $\process^2$ allows the program to continue.
After three more instructions, $\process^2$ will reach the third memory fence.
Again, the update player needs to empty the buffer of $\process^2$ which updates $\tuple{\xrd, \bot}$ and enables $\process^1$ to finish the simulation of the receive operation.

This concludes the mechanisms implemented to ensure that each channel message is received \emph{once and only once}.
We have constructed a TSO game with update fairness that simulates a perfect channel system.
The winning condition of the game will be the reachability condition induced by the final states of the PCS together with the update fairness condition.
We summarise our results in the following theorem.

\begin{theorem}
    The reachability problem for TSO games with update fairness is undecidable.
\end{theorem}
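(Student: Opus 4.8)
The goal is to show that the reachability problem for TSO games with update fairness is undecidable by reducing from the state reachability problem of perfect channel systems (PCS), which is undecidable. Given a PCS $\channelsystem = \tuple{\channelstateset, \channelmessageset, \transition}$ with initial configuration $\conf_0$ and final state $\channelstate_F$, I construct the two-process TSO program $\program$ depicted in \autoref{fig:ru-reduction-1} and \autoref{fig:ru-reduction-2}, take as initial configuration the TSO configuration with all buffers empty, $\process^1$ in state $\channelstate_0$, $\process^2$ in state $\state_1$, and memory mapping every variable to $\bot$, and let the reachability target be the set of local states $\set{\channelstate_F, \state_F}$. I then need to prove: \emph{the process player wins $\game(\program)$ with respect to $\wincon_{RU}$ if and only if $\channelstate_F$ is reachable in $\TS_\channelsystem$ from $\conf_0$.}

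**The two directions.** For the ``if'' direction, suppose $\channelstate_F$ is reachable in the PCS via a run $\conf_0 \to[\op_1]_\channelsystem \cdots \to[\op_n]_\channelsystem \conf_n$ with $\channelstate(\conf_n) = \channelstate_F$. I describe a strategy for the process player that walks $\process^1$ through the gadget corresponding to each $\op_i$ in turn; whenever the process player must wait (inside a receive gadget, after reaching $\hstate_1$ or $\hstate_2$, or after a memory fence in $\process^2$), the system is in a deadlock, so by update fairness the update player's very next move updates buffer messages — and I argue that her \emph{only} updates that do not immediately hand the process player a win are exactly the ``correct'' ones (update one $\tuple{\xwr,\channelmessage}$ message during a rotation, later update exactly one $\tuple{\yvar,1}$, empty $\process^2$'s buffer at each fence). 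Hence regardless of the update player's choices the simulation stays faithful or the process player reaches $\state_F$; either way the process player wins. For the ``only if'' direction, I argue the contrapositive: if $\channelstate_F$ is not PCS-reachable, I give an update-fair strategy for the update player — namely, always perform exactly the ``correct'' minimal updates described in the construction — and show by induction on the length of the play that every reachable game configuration projects to a genuine PCS configuration (buffer of $\process^1$, modulo the auxiliary $\tuple{\yvar,1}$ markers and the $\xrd$/$\xwr$ bookkeeping, encodes the channel contents), so $\process^1$ never enters $\channelstate_F$ and, because the update player follows the protocol exactly, $\process^2$ never enters $\state_F$; the resulting infinite play is update-fair and avoids the target, so the update player wins.

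**Key lemmas to isolate.** The proof factors cleanly through a \emph{simulation invariant}: I state a relation between game configurations and PCS configurations and show (a) it holds initially, (b) every maximal ``round'' of the game — the process player running through one gadget together with the forced update responses — either preserves it (mirroring one PCS transition) or reaches $\state_F$, and (c) it is only broken if some player deviates, in which case the deviation is punished. The ``loss prevention'' argument (the $\yvar$-marker protocol guarantees at most one $\tuple{\xwr,\channelmessage}$ update per rotation) and the ``duplication prevention'' argument (the deadlock-then-forced-update structure of the receive gadget guarantees exactly one receive per rotation) are each encapsulated as a sub-claim about what the update player can do without losing. It is also worth remarking that update fairness as formalised — ``if the process player is to move and is deadlocked, then all buffers are empty'' — is exactly what forces the update player's hand at the crucial waiting points, and that nothing else in the run constrains her.

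**Main obstacle.** The delicate part is the ``only if'' direction, specifically proving that the protocol-following update strategy is genuinely update-fair and that \emph{no} process-player deviation can both (i) break the simulation invariant and (ii) still avoid $\state_F$ forever. Because the process player moves first in each turn, she has considerable freedom to stall or interleave the two processes in unintended ways, and I must check every such interleaving is either harmless (still a faithful PCS run, possibly a permuted one) or routed by $\process^2$'s extra edges to $\state_F$. Carefully enumerating the gadget states and the update player's possible responses at each deadlock — and confirming the ``escape to $\state_F$'' transitions of \autoref{fig:ru-reduction-2} cover precisely the bad cases — is the crux; this is also where the reduction differs in bookkeeping from the one in \cite{DBLP:journals/corr/abs-2310-00990}, so the argument cannot be quoted wholesale. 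The remaining steps (initial configuration, faithful forward simulation, update-fairness of the honest run) are routine once the invariant is pinned down.
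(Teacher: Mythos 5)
Your proposal is correct and follows essentially the same route as the paper: a reduction from PCS state reachability using exactly the two-process program of \autoref{fig:ru-reduction-1} and \autoref{fig:ru-reduction-2}, with the $\yvar$-marker protocol preventing message loss, the deadlock-then-forced-update structure of the receive gadget preventing duplication, and the escape transitions to $\state_F$ punishing update-player deviations. Your explicit organisation into the two directions of the reduction via a simulation invariant is a reasonable formalisation of what the paper presents informally, and you correctly identify the delicate point (ruling out all unintended interleavings in the ``only if'' direction) that the paper likewise defers to its extended discussion.
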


\section{Process Fairness in Safety Games}
\label{sec:process-fairness}

In the previous section, we have limited the behaviour of the update player.
Now, we introduce \emph{process fairness}, which will impact the capabilities of the process player.
Process fairness means that for each process that is enabled infinitely many times during a run, the process player executes an instruction in that process infinitely often.
In reachability games, this is no real restriction to the process player:
If she can reach the set of winning states, then she can do so in finitely many moves.
Thus, any finite prefix of a play that reaches a winning state can then trivially be extended to an infinite play that admits process fairness.
Because of this, we will target our attention only towards safety games, where the process player cannot win in a finite amount of moves.

We formalise process fairness as follows.
Let $\program = \tuple{\process^\pid}_{\pid\in\indexset}$ be a TSO program with induced game $\game\of\program$.
We define $\wincon_P$ as the set of all plays $\play = \conf_0, \conf_1, \dots$ in $\game\of\program$ that satisfy process fairness:
$$ \play \in \wincon_P \iff \forall\ \pid \in \indexset: \left(
\exists^\infty\ k \in \Nat, \conf_k \in \confset_A: \conf_k \to[\instr_\pid] \conf'
\implies
\exists^\infty\ k' \in \Nat, \conf_{k'} \in \confset_A: \conf_{k'} \to[\instr'_\pid] \conf_{k'+1}
\right) $$
Given a safety condition $\wincon_S$, the intersection $\wincon_{SP} = \wincon_S \cap \wincon_P$ defines the set of winning plays that admit process fairness.
In the remainder of this section we will show that the safety problem under process fairness is undecidable.
To do so, we use a construction very similar to the one from the previous section to reduce the state reachability problem of perfect channel systems, to the safety problem of $\game\of\program$.
Before, it was the process player who decided which transition of the channel system to simulate.
This time, it will be the update player who has this task.

Consider again a perfect channel system $\channelsystem = \tuple{ \channelstateset, \channelmessageset, \transition }$.
We modify the construction from the previous section.
First, we introduce another shared variable $\zvar$.
Then, for each transition $\channeledge \in \transition$ of the perfect channel system, we add an auxiliary process $\process^\channeledge$.
It consists of exactly one state $\state_\channeledge$ and one looping transition $\state_\channeledge \to[\wr(\zvar, \channeledge)] \state_\channeledge$.
Furthermore, we prepend the gadget of process $\process^1$ that simulates $\channeledge$ with a transition $\rd(\zvar, \channeledge)$.
The result of this is shown in \autoref{fig:sp-reduction-1}.
Process $\process^2$ is taken from the previous construction without any changes and can be found in \autoref{fig:ru-reduction-2}.

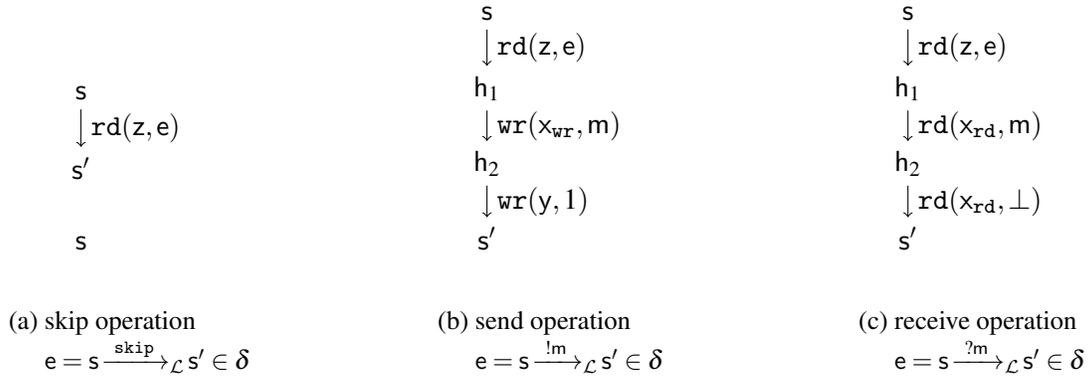
\begin{figure}
\centering

\begin{subfigure}[b]{0.3\linewidth}
\centering
\begin{tikzpicture}[yscale=-1]
    \node at (0,-1) {};
    \node at (0,0) (q1) {$\channelstate$};
    \node at (0,1) (q2) {$\channelstate'$};
    \node[transparent] at (0,2) {$\channelstate$};

    \draw[->] (q1) -- node[right] {$\rd(\zvar, \channeledge)$} (q2);
\end{tikzpicture}
\bigskip
\multicaption{skip operation}{$\channeledge = \channelstate \to[\nop]_\channelsystem \channelstate' \in \transition$}
\end{subfigure}
\hfill
%
\begin{subfigure}[b]{0.3\linewidth}
\centering
\begin{tikzpicture}[yscale=-1]
    \node at (0,0) (q1) {$\channelstate$};
    \node at (0,1) (h1) {$\hstate_1$};
    \node at (0,2) (h2) {$\hstate_2$};
    \node at (0,3) (q2) {$\channelstate'$};

    \draw[->] (q1) -- node[right] {$\rd(\zvar, \channeledge)$} (h1);
    \draw[->] (h1) -- node[right] {$\wr(\xwr,\channelmessage)$} (h2);
    \draw[->] (h2) -- node[right] {$\wr(\yvar,1)$} (q2);
\end{tikzpicture}
\bigskip
\multicaption{send operation}{$\channeledge = \channelstate \to[!\channelmessage]_\channelsystem \channelstate' \in \transition$}
\end{subfigure}
\hfill
%
\begin{subfigure}[b]{0.3\linewidth}
\centering
\begin{tikzpicture}[yscale=-1]
    \node at (0,0) (q1) {$\channelstate$};
    \node at (0,1) (h1) {$\hstate_1$};
    \node at (0,2) (h2) {$\hstate_2$};
    \node at (0,3) (q2) {$\channelstate'$};

    \draw[->] (q1) -- node[right] {$\rd(\zvar, \channeledge)$} (h1);
    \draw[->] (h1) -- node[right] {$\rd(\xrd,\channelmessage)$} (h2);
    \draw[->] (h2) -- node[right] {$\rd(\xrd,\bot)$} (q2);
\end{tikzpicture}
\bigskip
\multicaption{receive operation}{$\channeledge = \channelstate \to[?\channelmessage]_\channelsystem \channelstate' \in \transition$}
\label{fig:sp-reduction-1c}
\end{subfigure}

\caption{$\process^1$ of the reduction from PCS to a TSO game with process fairness.}
\label{fig:sp-reduction-1}
\end{figure}

The main idea of these modifications is that the update player can use the variable $\zvar$ to control which channel operation will be simulated.
At the start of the run, both $\zvar$ and $\xwr$ contain the initial value $\bot$, which means that neither $\process^1$ nor $\process^2$ are enabled.
Thus, the process player needs to begin playing in some process $\process^\channeledge$, writing the message $\tuple{\zvar, \channeledge}$ to its buffer.
This will continue until the update player decides to update one of these messages.
But, due to process fairness, the process player is forced to eventually play in \emph{all} enabled processes.
In particular, she has to do so infinitely many times during any infinite play.
This means that the update player can simply wait until each transition $\channeledge \in \transition$ was sufficiently many times added to the buffer of $\process^\channeledge$ to simulate a run of the PCS that reaches a final state.
At that point, the update player starts updating the messages $\tuple{\zvar, \channeledge}$ one by one, each time waiting until the process player has finished simulating the unique  operation that is enabled in $\process^1$.

In more detail, to simulate the execution of a channel operation $\channeledge \in \transition$, the update player updates the buffer message $\tuple{\zvar, \channeledge}$ to the memory.
Due to process fairness, we know that the process player eventually has to play in $\process^1$, since it is now enabled.
She takes the transition $\rd(\zvar, \channeledge)$ and then proceeds (although not necessarily immediately) with simulating $\channeledge$ as was presented for the reachability case in the previous section.
If $\channeledge$ is a receive operation, the update player has to update a $\tuple{\xwr, \channelmessage}$ buffer message at some point to enable $\process^2$ and start a message rotation.
Again, process fairness forces the process player to eventually finish the rotation protocol.
In any case, the simulation of the channel operation is guaranteed to terminate after finitely many steps.
Now, the update player starts the next simulation by updating the corresponding buffer message.

Also in this construction we need to ensure that we do not introduce any behaviour that does not correspond exactly to what the PCS can do.
Message loss due to updating two messages without rotation in between is handled in the same way as previously, using the auxiliary variable $\yvar$.
The same goes for message duplication, which is covered by the protocol between $\process^1$ and $\process^2$.
What is left is message loss due to performing two rotations without simulating a receive operation.
In the previous section, this could only happen if the process player decides to do so, since she was the one controlling the simulation.
This is still prevented by the aforementioned protocol:
The update player is not forced to let $\process^2$ proceed beyond the second and third memory fences before $\process^1$ keeps up with the protocol.
But in this construction, the roles and capabilities of the two players have changed slightly and allow for additional behaviour:
Without enabling a receive operation in $\process^1$, the update player could update a message $\tuple{\xwr, \channelmessage}$, which enables $\process^2$ instead.
Due to process fairness, the process player would eventually have to perform a full message rotation.

We prevent this by adding for every channel message $\channelmessage$ a transition $\state \to[\rd(\xrd, \channelmessage)] \state_F$ to \emph{every} state $\state$ of $\process^1$ \emph{except} the states $\hstate_1$ and $\hstate_2$ (cf. \autoref{fig:sp-reduction-1c}) of the receive operations of message $\channelmessage$.
Here, $\state_F$ is a sink state which is safe for the process player and blocks the update player from winning the game.
The idea of this transition is that during a message rotation, the value of $\xrd$ in the memory is $\channelmessage$.
Thus, the update player is not immediately losing only if $\process^1$ is currently in one of the intermediate states of a receive operation.
In particular, the process player can now move to $\hstate_2$.
Then, after the rotation has finished with the third memory fence, the variable $\xrd$ contains the value $\bot$ again, which means that $\process^1$ is enabled and the process player can finish the simulation of the receive operation.
We conclude that she can prevent the update player from performing a rotation without simulating a receive operation.

In summary, we have shown again that each channel message is read once and only once.
The winning condition of the TSO game is given by the safety condition induced by the final states of the PCS together with the process fairness condition.
This gives rise to the following theorem.
\begin{theorem}
    The safety problem for TSO games with process fairness is undecidable.
\end{theorem}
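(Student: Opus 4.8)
The plan is to reduce the (undecidable) state reachability problem of perfect channel systems to the safety problem for TSO games with process fairness, using the program $\program$ described above. Given a PCS $\channelsystem = \tuple{\channelstateset, \channelmessageset, \transition}$ with initial state $\channelstate_0$ and final state $\channelstate_F$ (we assume the initial channel empty, without loss of generality), take as the initial game configuration the one with empty buffers, $\process^1$ in $\channelstate_0$, the remaining processes in their initial states, and every shared variable set to $\bot$, and let the safety condition be induced by the single local state $\channelstate_F$ of $\process^1$. Every sink $\state_F$ — both those already present in $\process^2$ and the targets of the escape transitions $\state \to[\rd(\xrd,\channelmessage)] \state_F$ added to $\process^1$ — lies outside $\stateset_\wincon$ and has no successor, so entering one is safe for the process player and prevents the update player from ever winning. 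It then suffices to prove that the update player wins under $\wincon_{SP}$ if and only if $\channelstate_F$ is reachable in $\channelsystem$; since the construction is effective and deciding whether the process player wins is the complement of this, undecidability follows.

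For the "if" direction, fix a run of $\channelsystem$ from its initial configuration to a configuration with state $\channelstate_F$, and let $\channeledge_1, \dots, \channeledge_n \in \transition$ be the transitions it takes, in order. The update player first updates nothing; since each $\process^\channeledge$ is permanently enabled, process fairness forces the process player to write arbitrarily many copies of $\tuple{\zvar,\channeledge}$ into the buffer of $\process^\channeledge$, for every $\channeledge \in \transition$. Once enough copies are available the update player simulates the run step by step: to simulate $\channeledge_i$ she updates one message $\tuple{\zvar,\channeledge_i}$, enabling the matching gadget of $\process^1$, whereupon process fairness forces the process player to take its leading transition $\rd(\zvar,\channeledge_i)$ and, together with the message-rotation and $\yvar$ protocols of \autoref{sec:update-fairness} (which the update player drives when $\channeledge_i$ is a receive operation), to complete the simulation of $\channeledge_i$. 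Because those protocols guarantee that each channel message is received once and only once, after simulating $\channeledge_1, \dots, \channeledge_n$ the process $\process^1$ is in $\channelstate_F$, so the play enters $\stateset_\wincon$. If the process player instead deadlocks herself or violates process fairness she loses as well, so the update player wins in every case.

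For the "only if" direction, suppose $\channelstate_F$ is not reachable in $\channelsystem$. We claim the process player wins by playing \emph{faithfully}: whenever $\process^1$ or $\process^2$ is enabled she performs the unique protocol step prescribed by the construction — moving to a sink $\state_F$ as soon as some escape transition $\rd(\xrd,\channelmessage)$ is available — and otherwise she cycles through the always-enabled processes $\process^\channeledge$, so that process fairness is met and she is never deadlocked. Every play she produces is then infinite and in $\wincon_P$. That it also lies in $\wincon_S$ uses that the $\yvar$ mechanism forbids losing a message by updating two $\tuple{\xwr,\cdot}$ messages without a rotation in between, the $\process^1$--$\process^2$ protocol forbids duplicating a message, and the escape transitions forbid the update player from performing a rotation without $\process^1$ simulating the matching receive (any such attempt leaves $\memorymap\of\xrd = \channelmessage$ while $\process^1$ is in a state other than $\hstate_1$ or $\hstate_2$ of some receive gadget for $\channelmessage$, cf. \autoref{fig:sp-reduction-1c}, and the process player escapes). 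Hence, whenever $\process^1$ is in a PCS state $\channelstate \in \channelstateset$, that state is reachable in $\channelsystem$; since $\channelstate_F$ is not, it is never visited, and the $\state_F$ states are safe, so the process player wins.

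The step I expect to be the main obstacle is this soundness argument: verifying that the update player cannot profit from deviating from a faithful simulation, in particular that performing a rotation without a matching receive, or slipping in an extra buffer update, always leaves the process player an escape to a safe $\state_F$ state — which is exactly what the $\yvar$ protocol, the $\process^1$--$\process^2$ protocol, and the new escape transitions are designed for. The full strategy constructions and the complete verification are given in the extended version \cite{spengler2024reachability}.
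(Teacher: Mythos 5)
Your proposal is correct and follows essentially the same route as the paper: the same reduction from PCS state reachability using the $\zvar$-controlled gadgets, the auxiliary write-loop processes $\process^\channeledge$, the unchanged $\process^2$, and the escape transitions to $\state_F$, with process fairness forcing the process player to supply the $\tuple{\zvar,\channeledge}$ messages the update player needs to drive the simulation. The two directions you spell out (update player wins iff $\channelstate_F$ is reachable) and the soundness concerns you flag are exactly the content of the paper's informal argument, which likewise defers the full verification to the extended version.
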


\section{Load Buffer Semantics in TSO Games}

In \cite{DBLP:journals/lmcs/AbdullaABN18}, the authors introduced an alternative semantics for TSO, called \emph{load buffer semantics}.
It is equivalent to the traditional \emph{store buffer semantics} in the sense that a global state $\statemap$ of the system is reachable under load-buffer semantics if and only if it is reachable under store buffer semantics.
The alternative semantics have been proven to be useful in efficiently performing algorithmic verification or presenting simpler decidability proofs of safety properties.
A natural question in the context of this paper is to ask whether these results transfer to the game setting.
In particular, we want to know if a game is won by the same player when played under both semantics.
Unfortunately, it turns out that this is not the case.

\paragraph{Load Buffer Semantics}
Under the new semantics, the store buffer between each process and the shared memory is replaced by a load buffer instead.
This means that the information flow reverses its direction:
Instead of write operations, the buffer now contains potential \emph{read} operations that \emph{might} be performed by the process.
Each buffer message is either a pair $\tuple\xd$ or a triple $\tuple{\xd,\own}$, where the latter is called an \emph{own-message}.

At any point during the run, the system can nondeterministically choose a variable $\xvar$ and its corresponding value $\dval$ from the memory and add a message $\tuple\xd$ to the tail of the buffer of one of the processes.
This is called \emph{read propagation} and speculates on a future read operation on $\xvar$.
Conversely, a \emph{delete} operation removes the oldest message at the head of the buffer of some process and is also performed nondeterministically at any time.

A \emph{write} instruction $\wr\of\xd$ of a process $\process$ immediately updates the value $\dval$ of the variable $\xvar$ in the memory.
Then, it adds the own-message $\tuple{\xd,\own}$ to the buffer of $\process$.
The behaviour of a \emph{read} instruction $\rd\of\xd$ depends on the contents of the buffer.
If there is an own-message on the variable $\xvar$, then the most recent one must correspond to the value $\dval$.
Otherwise, if there is no such message, the head of the buffer must be a message $\tuple{\xd}$.
If this is not the case, the read instruction is disabled.
The last two instructions, which are \emph{skip} and \emph{memory fence}, work exactly as in the classical TSO semantics:
They only change the local state but not the memory or buffer, and the fence is only enabled if its buffer is empty.

For a formal definition of the semantics and the configurations of the induced transition system, we refer to \cite{DBLP:journals/lmcs/AbdullaABN18}.

\paragraph{Games}
Comparing the alternative to the classical TSO semantics, we see that the order in which the variables in the memory are updated, now depends directly on the order of execution of the corresponding write instructions, and not on the update order of the buffer messages.
Conversely, the buffer does not delay the time when a write operation arrives at the memory, but instead delays when the change in the memory is visible to each process.
Intuitively, we can already guess that the two semantics differ in the game setting, since the information available to the two players during the execution is different in both cases.

In the plain TSO game without fairness, there is actually no change.
Since both reachability and safety games degenerate to single-process games with no communication between processes, the exact update semantics do not matter.

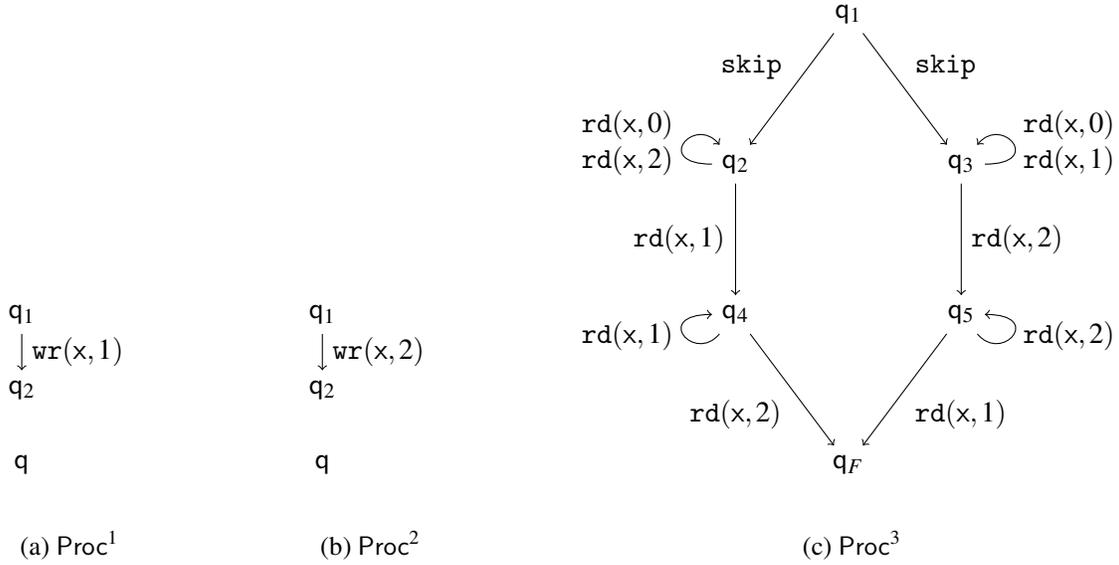
\begin{figure}
\centering

\begin{subfigure}[b]{0.2\linewidth}
\centering
\begin{tikzpicture}[yscale=-1]
    \node at (0,-1) {};
    \node at (0,0) (q1) {$\state_1$};
    \node at (0,1) (q2) {$\state_2$};
    \node[transparent] at (0,2) {$\state$};

    \draw[->] (q1) -- node[right] {$\wr(\xvar, 1)$} (q2);
\end{tikzpicture}
\bigskip
\caption{$\process^1$}
\end{subfigure}
\hfill
\begin{subfigure}[b]{0.2\linewidth}
\centering
\begin{tikzpicture}[yscale=-1]
    \node at (0,-1) {};
    \node at (0,0) (q1) {$\state_1$};
    \node at (0,1) (q2) {$\state_2$};
    \node[transparent] at (0,2) {$\state$};

    \draw[->] (q1) -- node[right] {$\wr(\xvar, 2)$} (q2);
\end{tikzpicture}
\bigskip
\caption{$\process^2$}
\end{subfigure}
\hfill
\begin{subfigure}[b]{0.5\linewidth}
\centering
\begin{tikzpicture}[xscale=1.5,yscale=-2]
    \node at ( 0,0) (q1) {$\state_1$};
    \node at (-1,1) (l1) {$\state_2$};
    \node at ( 1,1) (r1) {$\state_3$};
    \node at (-1,2) (l2) {$\state_4$};
    \node at ( 1,2) (r2) {$\state_5$};
    \node at ( 0,3) (q2) {$\state_F$};

    \draw[->] (q1) -- node[above left]  {$\nop$} (l1);
    \draw[->] (q1) -- node[above right] {$\nop$} (r1);
    \draw[->] (l1) to[out=180,in=240,looseness=8] node[left ,align=left] {$\rd(\xvar,0)$\\$\rd(\xvar,2)$} (l1);
    \draw[->] (r1) to[out=  0,in=-60,looseness=8] node[right,align=left] {$\rd(\xvar,0)$\\$\rd(\xvar,1)$} (r1);
    \draw[->] (l1) -- node[left]  {$\rd(\xvar,1)$} (l2);
    \draw[->] (r1) -- node[right] {$\rd(\xvar,2)$} (r2);
    \draw[->] (l2) to[out=120,in=180,looseness=8] node[left ] {$\rd(\xvar,1)$} (l2);
    \draw[->] (r2) to[out= 60,in=  0,looseness=8] node[right] {$\rd(\xvar,2)$} (r2);
    \draw[->] (l2) -- node[below left]  {$\rd(\xvar,2)$} (q2);
    \draw[->] (r2) -- node[below right] {$\rd(\xvar,1)$} (q2);
\end{tikzpicture}
\bigskip
\caption{$\process^3$}
\end{subfigure}

\caption{A concurrent program consisting of three processes.}
\label{fig:load-buffer-sp}
\end{figure}

This is not the case if we add fairness conditions.
Consider the safety game with process fairness played on the program shown in \autoref{fig:load-buffer-sp}.
The target state is $\state_F$ in $\process^3$ and the initial value of $\xvar$ is $0$.
Since the process player cannot be deadlocked in any other state of $\process^3$, the only way for the update player to win is to force the play into $\state_F$.
In our classical game setting under store buffer semantics, the update player is able to achieve this.
We describe a winning strategy for her.

Due to process fairness, the process player needs to eventually play in all three processes.
The update player waits until processes $\process^1$ and $\process^2$ are in their respective $\state_2$, and $\process_3$ is in either $\state_2$ or $\state_3$.
In the first case, if $\process_3$ is in state $\state_2$, she updates the buffer message of $\process^1$, which writes the value $1$ to the memory for variable $\xvar$.
The only enabled instruction for the process player is to move from $\state_2$ to $\state_4$ in $\process^3$.
Now, the update player updates the message from the buffer of $\process^2$, which forces the process player to move to the target state and lose.
In the other case, the update player performs the two update operations in the reverse order, which again forces the process player to enter the target state after two moves.

Next, consider the same game but under load buffer semantics.
We have not yet formally defined how they should work, but this is not necessary for our argument.
Assume that the process player as usual controls the program instructions and the update player in some way controls the nondeterministic buffer behaviour.
We will outline how the process player wins this game.

First, she plays in $\process^1$, then in $\process^2$.
At this point, the value of $\xvar$ in the memory is $2$, but the buffer of $\process^3$ might already contain messages of the form $\tuple{\xvar, 1}$ and $\tuple{\xvar, 2}$.
Note that it is only possible to have them in this exact order, i.e. it cannot be that there is some message $\tuple{\xvar, 2}$ that is older than another message $\tuple{\xvar, 1}$.
Furthermore, since the program has no other reachable write instructions, any message that will be added in the future must be $\tuple{\xvar, 2}$.
Now, the process player plays in $\process^3$ and moves to $\state_3$.
The update player needs the process player to eventually move to $\state_5$, which means she has to enable the instruction $\rd(\xvar, 2)$.
To do so, she deletes messages at the head of the buffer of $\process_3$ until it reaches a message $\tuple{\xvar, 2}$.
But due to the order of the messages in the buffer, this means that it lost all messages $\tuple{\xvar, 1}$ and also, as said previously, cannot add any more of them.
It follows that the process can never execute the next instruction $\rd(\xvar, 1)$ and is thus stuck in $\state_5$.
Since this is not a deadlock for the process player, it results in a winning play for her.

We conclude that TSO safety games with process fairness do not have the same winning configurations under store buffer semantics and load buffer semantics, respectively.
The same can be shown for reachability games with update fairness.
Since it does not yield any additional insights, we do not present the argument here.

\section{Conclusion and Future Work}
In this paper, we continue the work on two-player games played on programs running under TSO semantics.
We present a game model where one player controls the instructions of the program and the other player controls the buffer updates.
Our results show that both the reachability problem and the safety problem for these games reduce to the analysis of games on single-process programs.
Moreover, we show a bisimilarity to a game with a finite amount of configurations and use it to prove that the problems are in fact \pspace-complete.

The reduced complexity comes from the optimal behaviour of the two players.
The process player can always win by playing in only one single process, while the best strategy of the update player is to stay passive and not perform any buffer updates.
We rectify this by introducing fairness conditions for both players.
In reachability games, the update player is required to update each message eventually.
This allows the process player to wait for a write instruction to arrive in the memory.
In safety games, during an infinite run the process player has to perform instructions in all enabled processes infinitely often.
Both restrictions lead to the respective problems being undecidable.

Finally, we connect the game model to the alternative load buffer semantics of TSO.
We show that the equivalence between load buffer and store buffer that exists for classical TSO reachability does not carry over to the game setting.

\bigskip

This work analyses the basic winning conditions reachability and safety.
Future work may expand the focus to more expressive winning conditions, like Büchi (i.e. repeated reachability), Co-Büchi, Parity, Rabin, Streett or Muller.
Another way to expand is to look at other fairness conditions for the two players, for example transition fairness.
These two directions of research are not orthogonal to each other since Muller or even Streett conditions might be able to encode some forms of fairness conditions.

\newpage
\bibliographystyle{eptcs}
\bibliography{bibdatabase}

\end{document}